\newcommand{\lm}{lmw}
\newcommand{\LM}{LMIM-width\ }
\newcommand{\lmo}{mim}
\begin{document}
\title{Linear MIM-Width of Trees
\thanks{This is the appendix of our WG submission, the long version with extra figures and full proofs}}

\author{Svein Høgemo \and Jan Arne Telle  \and Erlend Raa Vågset}
\authorrunning{S. Høgemo et al.} 
\institute{Department of Informatics, University of Bergen, Norway. \\
\email{\{svein.hogemo, jan.arne.telle, erlend.vagset\}@uib.no}}

\maketitle
\begin{abstract}
We provide an $O(n \log n)$ algorithm computing the linear maximum induced matching width of a tree and an optimal layout. 
\end{abstract}

\section{Introduction}
The study of structural graph width parameters like tree-width, clique-width and rank-width has been ongoing for a long time, and their algorithmic use has been steadily increasing \cite{HlinenyOSG08, Oum17}. The maximum induced matching width, denoted MIM-width, and the linear variant LMIM-width, are graph parameters having very strong modelling power introduced by Vatshelle in 2012 \cite{vatshelle2012new}. 
The LMIM-width parameter asks for a linear layout of vertices such that the bipartite graph induced by 
edges crossing any vertex cut has a maximum induced matching of bounded size.
Belmonte and Vatshelle \cite{BELMONTE201354} \footnote{In \cite{BELMONTE201354}, results are stated in terms of $d$-neighborhood equivalence, but in the proof, they actually gave a bound on LMIM-width.} showed that {\sc interval} graphs, {\sc bi-interval} graphs, {\sc convex} graphs and {\sc permutation} graphs, where clique-width can be proportional to the square root of the number of vertices \cite{GolumbicR99}, all have LMIM-width $1$ and an optimal layout can be found in polynomial time. 

Since many well-known classes of graphs have bounded MIM-width or LMIM-width, algorithms that run in 
XP time in these parameters will yield polynomial-time algorithms 
on several interesting graph classes at once. Such algorithms have been developed for many problems: by Bui-Xuan et al ~\cite{BUIXUAN201366} for the class of LCVS-VP - Locally Checkable Vertex Subset and Vertex Partitioning - problems, by Jaffke et al for non-local problems like {\sc Feedback Vertex Set} \cite{JaffkeKT18, JaffkeKT17} and also for {\sc Generalized Distance Domination} 
\cite{JaffkeKST18},  by Golovach et al \cite{GolovachHKKSV18} for output-polynomial {\sc Enumeration of Minimal Dominating} sets, by Bergougnoux and Kant\'e \cite{Bergougnoux} for several Connectivity problems
and by Galby et al for {\sc Semitotal Domination} \cite{abs-1810-06872}. These results give a common explanation for many classical results in the field of algorithms on special graph classes and extends them to the field of parameterized complexity.

Note that very low MIM-width or LMIM-width still allows quite complex cuts compared to similarly defined graph parameters. For example, carving-width $1$ allows just a single edge, maximum matching-width $1$ a star graph, and rank-width $1$ 
a complete bipartite graph. In contrast, LMIM-width $1$ allows any cut where the neighborhoods of the vertices in a color class can be ordered linearly w.r.t. inclusion. 
In fact, it is an open problem whether the class of graphs having LMIM-width $1$ can be recognized in polynomial-time or if this is NP-complete.
S\ae ther et al \cite{SaetherV16} showed that computing the exact MIM-width and LMIM-width of general graphs is W-hard  and not in APX unless NP=ZPP, while Yamazaki \cite{Yamazaki18} shows that under the small set expansion hypothesis it is not in APX unless P=NP. The only graph classes where we know an exact polynomial-time algorithm computing LMIM-width are the above-mentioned classes {\sc interval}, {\sc bi-interval}, {\sc convex} and {\sc permutation} that all have structured neighborhoods implying LMIM-width $1$ \cite{BELMONTE201354}.
Belmonte and Vatshelle also gave polynomial-time algorithms showing that {\sc circular arc} and {\sc circular permutation} graphs have LMIM-width at most $2$, while
{\sc Dilworth} $k$ and $k$-{\sc trapezoid} have LMIM-width at most $k$ \cite{BELMONTE201354}. 
Recently, Fomin et al
\cite{FGR18} showed that LMIM-width for the very general class of {\sc $H$-graphs} is bounded by $2|E(H)|$, and that a layout can be found in polynomial time if given an $H$-representation of the input graph. However, none of these results compute the exact LMIM-width. On the negative side, Mengel \cite{Mengel16} has shown that {\sc strongly chordal split} graphs, {\sc co-comparability} graphs and
{\sc circle} graphs all can have MIM-width, and LMIM-width, linear in the number of vertices.

Just as LMIM-width can be seen as the linear variant of MIM-width, path-width can be seen as the linear variant of tree-width. Linear variants of other well-known parameters like clique-width and rank-width have also been studied. Arguably, the linear variant of MIM-width commands a more noteworthy position, since in contrast to these other linear parameters, for almost all well-known graph classes where the original parameter (MIM-width) is bounded then also the linear variant (LMIM-width) is bounded. 

In this paper we give an $O(n \log n)$ algorithm computing the LMIM-width of an $n$-node tree. 
This is the first graph class of LMIM-width larger than $1$ having a polynomial-time algorithm computing LMIM-width and thus constitutes an important step towards a better understanding of this parameter.
The path-width of trees was first studied in the early 1990s by M\"ohring \cite{mohring1990graph}, with Ellis et al \cite{EllisST94} giving an $O(n \log n)$ algorithm computing an optimal path-decomposition, and Bodlaender \cite{Bod96} an $O(n)$ algorithm. 
In 2013
Adler and Kant\'e \cite{AdlerK15} gave linear-time algorithms computing the linear rank-width of trees and also the linear clique-width of trees, by reduction to the path-width algorithm. 
Even though LMIM-width is very different from path-width, the basic framework of our algorithm is similar to the  path-width algorithm in \cite{EllisST94}.

In Section \ref{sec:class} we give some standard definitions and prove the Path Layout Lemma, that if a tree $T$ has a path $P$ such that all components of $T \setminus N[P]$ have LMIM-width at most $k$ then $T$ itself has a linear layout with LMIM-width at most $k+1$.
We use this to prove a classification theorem stating that a tree $T$ has LMIM-width at least $k+1$ if and only if there is a node
$v$ such that after rooting $T$ in $v$, at least three children of $v$ {\em themselves have at least one child} whose rooted subtree has LMIM-width at least $k$. From this it follows that the LMIM-width of an $n$-node tree is no more than $\log n$. 
Our $O(n \log n)$ algorithm computing LMIM-width of a tree $T$ picks an arbitrary root $r$ and proceeds bottom-up on the rooted tree $T_r$. In Section \ref{sec:root} we show how to assign labels to the rooted subtrees encountered in this process giving their LMIM-width. However, as with the algorithm computing pathwidth of a tree, the label is sometimes complex, consisting of LMIM-width of a sequence of subgraphs, of decreasing LMIM-width, that 
are not themselves full rooted subtrees.
Proposition \ref{Prop: Case} is an 8-way case analysis giving a subroutine used to update the label at a node given the labels at all children.
In Section \ref{sec:comp} we give our bottom-up algorithm, which will make calls to the subroutine underlying Proposition \ref{Prop: Case} in order to compute the complex labels and the LMIM-width. 
%
Finally, we use all the computed labels to lay out the tree in an optimal manner.

\section{Classifying LMIM-width of Trees} \label{sec:class}

We use standard graph theoretic notation, see e.g. \cite{Diestel}.
For a graph $G=(V,E)$ and subset of its nodes $S \subseteq V$ we denote by $N(S)$ the set of neighbors of nodes in $S$, by $N[S]= S \cup N(S)$ its closed neighborhood, and by $G[S]$ the graph induced by $S$.
For a bipartite graph $G$ we denote by MIM(G), or simply MIM if the graph is understood, the size of its Maximum Induced Matching, the largest number of edges whose endpoints induce a matching.
Let $\sigma$ be the linear order corresponding to the enumeration $v_1, \dots, v_n$ of the nodes of $G$, this will also be called a linear layout of $G$. 
For any index $1 \leq  i < n$ we have a cut of $\sigma$ that defines the bipartite graph on edges ''crossing the cut'' i.e. edges with one endpoint in $\{v_1, \dots, v_i\}$ and the other endpoint in $\{v_{i+1}, \dots, v_n\}$. 
The maximum induced matching of $G$ under layout $\sigma$ is denoted $\lmo(\sigma,G)$, and is defined as the maximum, over all cuts of $\sigma$, of the value attained by the MIM of the cut, i.e. of the bipartite graph defined by the cut. 
%
The linear induced matching width -- LMIM-width -- of $G$ is denoted $\lm(G)$, and is the minimum value of $\lmo(\sigma,G)$ over all possible linear orderings $\sigma$ of the vertices of $G$.

We start by showing that if we have a path $P$ in a tree $T$ then the LMIM-width of $T$ is no larger than the largest LMIM-width of any component of $T \setminus N[P]$, plus  $1$ . To define these components the following notion is useful.

\begin{definition}[Dangling tree]
Let $T$ be a tree containing the adjacent nodes $v$ and $u$. The dangling tree from $v$ in $u$, $T\langle v,u\rangle$, is the component of $T \setminus (u,v)$ containing $u$.
\end{definition}

Given a node $x \in T$ with neighbours $\{v_1,\ldots,v_d\}$, the forest obtained by removing $N[x]$ from $T$ is a collection of dangling trees $\{T\langle v_i,u_{i,j}\rangle\}$, where $u_{i,j} \neq x$ is some neighbour of $v_i$. We can generalise this to a path $P = (x_1,\ldots,x_p)$ in place of $x$, such that $T\backslash N[P] = \{T\langle v_{i,j},u_{i,j,m}\rangle\}$, where $v_{i,j} \in N(P)$ is a neighbour of $x_i$ and $u_{i,j,m} \not\in N[P]$. See top part of Figure \ref{fig-lay1}. This naming convention will be used in the following.

\begin{lemma}[Path Layout Lemma] \label{Path Layout Lemma}
Let $T$ be a tree. If there exists a path $P=(x_1,\ldots,x_p)$ in $T$ such that every connected component of $T\backslash N[P]$ has \LM $\leq k$ then $\lm(T) \leq k+1$. Moreover, given the layouts for the components we can in linear time compute the layout for $T$.
\end{lemma}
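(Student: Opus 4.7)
The plan is to construct a concrete linear layout $\sigma$ of $T$ with $\lmo(\sigma,T) \le k+1$, by walking along $P$ and interleaving the supplied optimal layouts of the dangling trees. For each dangling tree $T\langle v_{i,j},u_{i,j,m}\rangle$, let $\rho_{i,j,m}$ denote the given optimal layout, of width $\le k$. I would define the block for $x_i$ as $L_i$, consisting of $x_i$ followed by, for each non-path neighbour $v_{i,j}$ of $x_i$ (in some fixed order), the concatenation of the $\rho_{i,j,m}$ over all $m$ and then the vertex $v_{i,j}$. The overall layout is $\sigma = L_1 L_2 \cdots L_p$; since it is formed by traversing $P$ once and appending each sub-layout and $v_{i,j}$ in turn, the construction takes linear time.

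To verify $\lmo(\sigma,T) \le k+1$ I would case-split on the position of the cut. Three ``easy'' families appear: (i) a cut between blocks $L_i$ and $L_{i+1}$, or just after the last $v_{i,j}$ of $L_i$, leaves only the path-edge $x_i x_{i+1}$ crossing, giving MIM $\le 1$; (ii) a cut immediately after $x_i$ inside $L_i$ produces a crossing graph that is a single star centred at $x_i$, so MIM $\le 1$; (iii) a cut between some $\rho_{i,j,m}$ and the following $v_{i,j}$ or $\rho_{i,j+1,1}$ gives two stars, one at $x_i$ (on the left) and one at $v_{i,j}$ (on the right), but the tree-edge $x_i v_{i,j}$ itself also crosses and lies in the bipartite cut-graph, joining any star-at-$x_i$ edge to any star-at-$v_{i,j}$ edge inside the induced subgraph; hence at most one edge from the two stars can appear in any induced matching, and MIM $\le 1$.

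The main obstacle is a cut strictly inside some dangling tree $D = T\langle v_{i,j},u_{i,j,m}\rangle$. Then the bipartite cut-graph has three kinds of edges: (A) internal edges of $D$, contributing at most $k$ to any induced matching by the width of the supplied sub-layout of $D$; (B) a star at $x_i$ on the left with spokes $v_{i,j'}$ for $j' \ge j$ and $x_{i+1}$; and (C) a star at $v_{i,j}$ on the right with spokes those $u_{i,j,m'}$ that lie on the left. The ``$x_i v_{i,j}$ in the bipartite graph'' observation from (iii) again forbids any induced matching from containing both a (B)-edge and a (C)-edge, so the (B)+(C) contribution is at most one edge. A remaining subtlety is that if $u_{i,j,m}$ lies on the left then the tree-edge $u_{i,j,m} v_{i,j}$ is itself a crossing edge, creating potential extra incidences between (A)-edges and (C)-edges incident to $u_{i,j,m}$; a routine sub-case check (distinguishing whether the chosen (C)-edge uses $u_{i,j,m}$ and, if so, forcing the (A)-part to avoid it while preserving its size bound) shows that the total induced matching still has size at most $k+1$. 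Thus $\lmo(\sigma,T) \le k+1$, giving $\lm(T) \le k+1$.
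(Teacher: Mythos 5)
Your proposal is correct and follows essentially the same route as the paper: the identical block-by-block concatenation along $P$, the observation that edges of a single dangling tree contribute at most $k$ via the supplied sub-layout, and the key point that the crossing edge $x_i v_{i,j}$ prevents more than one edge of the two stars at $x_i$ and $v_{i,j}$ from entering an induced matching. (The subtlety you flag about $u_{i,j,m}v_{i,j}$ is harmless for the upper bound, since extra edges in the cut graph can only shrink induced matchings.)
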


\begin{proof}
Using the optimal linear orderings of the connected components of $T\backslash N[P]$, we give the below algorithm \textsc{LinOrd} constructing a linear order $\sigma_T$ on the nodes of $T$ showing that \lm  of $T$ is $\leq k+1$. The ordering $\sigma_T$ starts out empty and the algorithm has an outer loop going through vertices in the path $P=(x_1,\ldots,x_p)$. When arriving at $x_i$ it uses the concatenation operator $\oplus$ to add the path node $x_i$ before looping over all neighbors $v_{i,j}$ of $x_i$ adding the linear orders of each dangling tree from $v_{i,j}$ and then $v_{i,j}$ itself.
See Figure \ref{fig-lay1} for an illustration.

\begin{algorithmic}
\Function{LinOrd}{$T$: tree, $P=(x_1,\ldots,x_p)$: path,
$\{\sigma_{T\langle v_{i,j}, u_{i,j,m}\rangle}\}$: lin-ords} 
\State $\sigma_T \gets \emptyset$ \Comment {The list starts out empty}
\For {$i\gets 1,p$} \Comment {For all nodes on path $(x_1,\ldots,x_p)$} 
	\State $\sigma_T \gets \sigma_T \oplus x_i$ \Comment {Append path node}
	\For {$j\gets 1,|N(x_i)\backslash P|$} \Comment {For all nbs of $x_i$ not on path: $v_{i,j}$}
		\For {$m\gets 1,|N(v_{i,j})\backslash x_i|$} \Comment{For all dangling trees from $v_{i,j}$} 
			\State $\sigma_T \gets \sigma_T \oplus 
				\sigma_{T\langle v_{i,j}, u_{i,j,m}\rangle}$ \Comment {Append given order of $T\langle v_{i,j}, u_{i,j,m}\rangle$}
		\EndFor
		\State $\sigma_T \gets \sigma_T \oplus v_{i,j}$ \Comment {Append $v_{i,j}$}
	\EndFor
\EndFor
\EndFunction
\end{algorithmic}

\begin{center}
\begin{figure}[ht!]
\centering
\includegraphics[width=100mm]{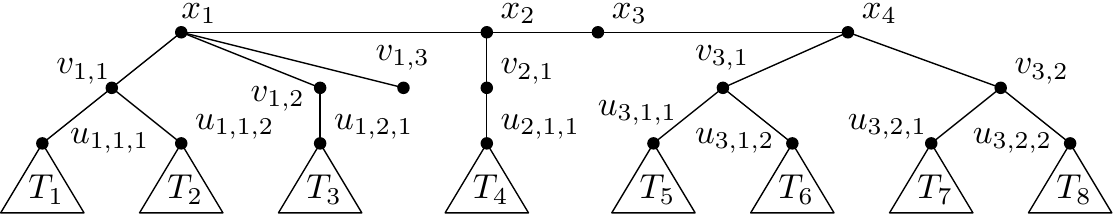}

\bigskip

\includegraphics[width=120mm]{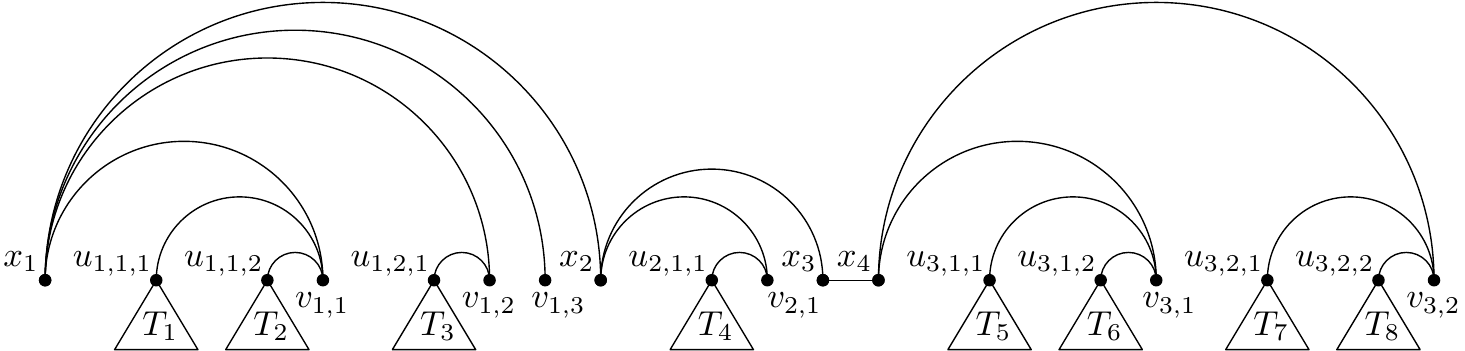}
\caption{A tree with a path $P=(x_1,x_2,x_3,x_4)$, with nodes in $N[N[P]]$ and dangling trees featured, and below it the order given by the Path Layout Lemma}
\label{fig-lay1}
\end{figure}
\end{center}

Firstly, from the algorithm it should be clear that each node of $T$ is added exactly once to $\sigma_T$, that it runs in linear time, and that there is no cut containing two crossing edges from two separate dangling trees. 
Now we must show that $\sigma_T$ does not contain cuts with MIM larger than $k+1$. By assumption the layout of each dangling tree has no cut with MIM larger than $k$, and since these layouts can be found as subsequences of $\sigma_T$ it follows that then also $\sigma_T$ has no cut with more than $k$ edges from a single dangling tree $T\langle v_{i,j}, u_{i,j,m}\rangle$. Also, we know that edges from two separate dangling trees cannot both cross the same cut. The only edges of $T$ left to account for, i.e. not belonging to one of the dangling trees, are those with both endpoints in $N[N[P]]$, the nodes at distance at most 2 from a node in $P$. 
For every cut of $\sigma_T$ that contains more than a single crossing edge $(x_i,x_{i+1})$ there is a unique $x_i \in P$ and a unique $v_{i,j} \in N(x_i)$ such that every edge with both endpoints in $N[N[P]]$ that crosses the cut is incident on either $x_i$ or $v_{i,j}$, and since the edge connecting $x_i$ and $v_{i,j}$ also crosses the cut at most one of these edges can be taken into an induced matching.
With these observations in mind, it is clear that $\lm(T) \leq \lmo(\sigma_T,T) \leq k+1$.

\end{proof}

\begin{definition}[$k$-neighbour and $k$-component index]
Let $x$ be a node in the tree $T$ and $v$ a neighbour of $x$. If $v$ has a neighbour $u \neq x$ such that $\lm(T\langle v,u\rangle) \geq k$, then we call $v$ a \textbf{$k$-neighbour} of $x$. The \textbf{$k$-component index of $x$} is equal to the number of $k$-neighbours of $x$ and is denoted $D_T(x,k)$, or shortened to $D(x,k)$.
\end{definition}

\begin{theorem}[Classification of LMIM-width of Trees] \label{Thm: Classification theorem}
For a tree $T$ and $k \geq 1$ we have $\lm(T) \geq k+1$ if and only if $D(x,k) \geq 3$ for some node  $x$.
\end{theorem}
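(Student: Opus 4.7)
\emph{Proof plan.} I will prove the two implications separately.

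For the $(\Leftarrow)$ direction, suppose $D(x,k) \geq 3$ and let $v_1, v_2, v_3$ be three distinct $k$-neighbours of $x$ with witnesses $u_i \in N(v_i) \setminus \{x\}$ so that $T_i := T\langle v_i, u_i\rangle$ satisfies $\lm(T_i) \geq k$. These subtrees lie in distinct branches of $T$ at $x$, so they are pairwise disjoint, contain none of $x, v_1, v_2, v_3$, and have no $T$-edge to any vertex in $\{x, v_j, u_j\}$ for $j \neq i$. Given an arbitrary linear layout $\sigma$ of $T$, I would produce a cut with MIM at least $k+1$. For each $i$ the restriction of $\sigma$ to $V(T_i)$ is a layout of $T_i$, so there is a cut position $p_i$ of $\sigma$ at which the bipartite graph, restricted to $V(T_i)$, attains an induced matching $M_i$ of size $\geq k$. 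Relabelling so that $\sigma(v_1) < \sigma(v_2) < \sigma(v_3)$, a case analysis on the position of $x$ and on how the ranges $[\min_{t \in T_i} \sigma(t), \max_{t \in T_i} \sigma(t)]$ interleave in $\sigma$ produces an index $i$ and a cut position $p^*$ at which $T_i$ still realises an induced matching of size $\geq k$, while some edge $(x, v_j)$ with $j \neq i$ crosses $p^*$. Because neither $x$ nor $v_j$ has any $T$-edge into $V(T_i)$, adjoining $(x, v_j)$ to the $T_i$-matching preserves the induced-matching property, so the cut at $p^*$ witnesses MIM at least $k+1$ and hence $\lm(T) \geq k+1$.

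For the $(\Rightarrow)$ direction (contrapositive), assume $D(x,k) \leq 2$ for every $x$ and aim for $\lm(T) \leq k$. By the Path Layout Lemma it suffices to construct a path $P$ in $T$ such that every component of $T \setminus N[P]$ has $\lm \leq k-1$. Every such component is a dangling tree $T\langle v, u\rangle$ with $v \in N(P) \setminus P$ attached to some $x \in P$ and $u \in N(v) \setminus \{x\}$, where the tree structure automatically forces $u \notin N[P]$; requiring $\lm(T\langle v, u\rangle) \leq k-1$ for every such dangling tree is equivalent to the statement that no $v \in N(x) \setminus P$ is a $k$-neighbour of $x$, i.e.\ $P$ contains every $k$-neighbour of each of its own vertices. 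Writing $K(x)$ for the set of $k$-neighbours of $x$, so that $|K(x)| \leq 2$, I would build $P$ by starting from a suitably chosen vertex and extending greedily: at each endpoint $y$, if there is some $w \in K(y) \setminus P$, append $w$ to $P$. Using that $\lm$ is monotone under induced subtrees, a short argument shows that once $P$ crosses a vertex $v$ with $|K(v)| = 2$, every subsequent endpoint has its $P$-predecessor lying in its own $K$-set, so at most one new vertex needs to be appended per step and $P$ remains a simple path; when the process terminates, $P$ is $K$-closed. The Path Layout Lemma then yields $\lm(T) \leq (k-1)+1 = k$.

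The main obstacle is the case analysis in $(\Leftarrow)$: for each possible interleaving of $x, v_1, v_2, v_3$ with the ranges of the $T_i$'s in $\sigma$, one must exhibit some $i$ and a cut $p^*$ at which $T_i$ has MIM $\geq k$ and some $(x, v_j)$-edge with $j \neq i$ crosses, which is where the pigeonhole on the three $v_i$-positions does the essential work.
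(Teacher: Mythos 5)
Your overall strategy coincides with the paper's: for $(\Leftarrow)$ you analyse an arbitrary layout via the three remote dangling trees together with the edges $(x,v_i)$, and for $(\Rightarrow)$ you argue the contrapositive by building a path that contains every $k$-neighbour of each of its vertices and then invoking the Path Layout Lemma. The plan is sound, but in both directions the step you defer is exactly where the proof lives, and as written each sketch needs a specific idea you have not supplied.

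In $(\Leftarrow)$, note first that you only know $T_i$ realises an induced matching of size $k$ at the particular position $p_i$, so your $p^*$ must in fact be one of $p_1,p_2,p_3$; the freedom suggested by ``some cut position $p^*$'' is illusory. The device that closes the case analysis is to order the cuts, say $p_1\le p_2\le p_3$, and work only at the middle one: since each $T_j$ is connected and has no vertex adjacent to $V(T_2)$, any vertex of $T_1$, $T_3$, $v_1$ or $v_3$ on the ``wrong'' side of $p_2$ yields an edge that can be added to the $T_2$-matching, giving MIM $k+1$ at $p_2$ outright; the ordering then forces $T_1\cup\{v_1\}$ entirely before $p_2$ and $T_3\cup\{v_3\}$ entirely after, so wherever $x$ is placed one of $(x,v_1),(x,v_3)$ crosses $p_2$. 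Without this (or an equivalent) the ``pigeonhole on the three $v_i$-positions'' does not obviously terminate in a valid cut. In $(\Rightarrow)$, the greedy extension can stall: an endpoint $y$ of the current path may have $K(y)=\{w_1,w_2\}$ with neither $w_i$ equal to its predecessor on $P$, and a simple path cannot absorb both while keeping the predecessor. The paper avoids this by first showing that $X=\{x:\ |K(x)|=2\}$ induces a path (every vertex between two members of $X$ lies in $X$ by monotonicity of $\lm$ on dangling trees, and a branching vertex of $X$ would have three $k$-neighbours, contradicting the hypothesis), taking $P$ to be that path extended by one vertex at each end, and treating $X=\emptyset$ separately by a greedy walk through $\{x:\ |K(x)|=1\}$. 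Your phrase ``suitably chosen starting vertex'' is hiding precisely this structure; with these two repairs your argument becomes the paper's proof.
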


\begin{proof}
We first prove the backward direction by contradiction. Thus we assume $D(x,k) \geq 3$ for a node $x$ and there is a linear order $\sigma$ such that $\lmo(\sigma,T) \leq k$. 

Let $v_1,v_2,v_3$ be the three $k$-neighbors of $x$ and $T_1,T_2,T_3$ the three trees of $T\setminus N[x]$ each of LMIM-width $k$, with $v_i$ connected to a node of $T_i$ for $i=1,2,3$, that we know must exist by the definition of $D(x,k)$. 
We know that for each $i=1,2,3$ we have a cut $C_i$ in $\sigma$ with MIM=$k$ and all $k$ edges of this induced matching coming from the tree $T_i$. Wlog we assume these three cuts come in the order $C_1, C_2, C_3$, i.e. with the cut having an induced matching of $k$ edges of $T_2$ in the middle. Note that in $\sigma$ all nodes of $T_1$ must appear before $C_2$ and all nodes of $T_3$ after $C_2$, as otherwise, since $T$ is connected and the distance between $T_2$ and the two trees $T_1$ and $T_3$ is at least two, there would be an extra edge crossing $C_2$ that would increase MIM of this cut to $k+1$. 
It is also clear that $v_1$ has to be placed before $C_2$ and $v_3$ has to be placed after $C_2$, for the same reason, e.g. the edge between $v_1$ and a node of $T_1$ cannot cross $C_2$ without increasing MIM. But then we are left with the vertex $x$ that cannot be placed neither before $C_2$ nor after $C_2$ without increasing MIM of this cut by adding at least one of $(v_1,x)$ or $(v_3,x)$ to the induced matching. We conclude that $D(x,k) \geq 3$ for a node $x$ implies LMIM-width at least $k+1$.
%


To prove the forward direction we first show the following partial claim:
if $\lm(T) \geq k+1$ then there exists a node $x \in T$ such that $D(x,k)\geq 3$; or there exists a strict subtree $S$ of $T$ with $\lm(S) \geq k+1$.
We will prove the contrapositive statement, so let us assume that every node in $T$ has $D(x,k)<3$ and no strict subtree of $T$ has \LM $\geq k+1$ and show that then $\lm(T) \leq k$.
For every node $x \in T$, it must then be true that $D(x,k) \leq 2$ and that $D(x,k+1) = 0$. The strategy of this proof is to show that there is always a path $P$ in $T$ such that all the connected components in $T \backslash N[P]$ have \LM $\leq k-1$. When we have shown this, we proceed to use the Path Layout Lemma, to get that $\lm(T) \leq k$.
To prove this, we define the following two sets of vertices: 
$$X = \{x|x\in V(T)\ and\ D(x,k)=2\}, Y = \{y|y\in V(T)\ and\ D(y,k)=1\}$$
Case 1: $X \neq \emptyset$\\
If $x_i$ and $x_j$ are in $X$, then every vertex on the path $P(x_i,\ldots,x_j)$ connecting $x_i$ and $x_j$ must be elements of $X$, as every node on this path clearly has a dangling tree with \LM $k$ in the direction of $x_i$ and in the direction of $x_j$. The fact that every pair of vertices in $X$ are connected by a path in $X$ means that $X$ must be a connected subtree of $T$.
Furthermore, this subtree must be a path, otherwise there are three disjoint dangling trees $T\langle  v_1,u_1\rangle, T\langle v_2,u_2\rangle, T\langle v_3, u_3\rangle$, each with \LM $k$, and each hanging from a separate node.
But then there is some vertex $w$ such that $T\langle  v_1,u_1\rangle, T\langle v_2,u_2\rangle $ and $ T\langle v_3, u_3\rangle$ are subtrees of dangling trees from different neighbours of $w$. But this implies that $D(w,k)\geq 3$, which we assumed were not the case, so this leads to a contradiction. We therefore conclude that all nodes in $X$ must lie on some path $P=(x_1,\ldots,x_p)$.
The final part of the argument lies in showing that we can apply the Path Layout Lemma. For some $x_i \in P, i \in \{2,\ldots,p-1\}$, its $k$-neighbours are $x_{i-1}$ and $x_{i+1}$. For $x_1$, these neighbours are $x_2$ and some $x_0 \not\in X$. For $x_p$, these neighbours are $x_{p-1}$ and some $x_{p+1} \not\in X$.
$x_0$ and $x_{p+1}$ may only have one $k$-neighbour – $x_1$ and $x_p$ respectively – or else they would be in $X$. If we make $P' = (x_0,\ldots,x_{p+1})$, we then see that every connected component in $T\backslash N[P']$ must have \LM $\leq k-1$. By the Path Layout Lemma, $\lm(T) \leq k$.
\\

Case 2: $X= \emptyset$, $Y \neq \emptyset$\\
We construct the path $P$ in a simple greedy manner as follows. We start with $P=(y_1,y_2)$, where $y_1$ is some arbitrary node in $Y$, and $y_2$ its only $k$-neighbour. Then, if the highest-numbered node in $P$, call it $y_q$, has a $k$-neighbour $y' \not\in P$, then we assign $y_{q+1}$ to $y'$, and repeat this process exhaustively.
Since we look at finite graphs, we will eventually reach some node $y_p$ such that either $y_p \not\in Y$ or $y_p$'s $k$-neighbour is $y_{p-1}$. We are then done and have $P=(y_1,\ldots,y_p)$, which must be a path in $T$, since every node $y_{i+1}\in P$ is a neighbour of $y_i$ and for $y_i$ we only assign maximally one such $y_{i+1}$. Also, every connected component of $T\backslash N[P]$ must have \LM $\leq k-1$. If not, some node $y_i\in P$ would have a $k$-neighbour $y' \not\in P$, but by the assumption $X= \emptyset$ this is impossible, since then 
either $i<p$ and $y_i$ has two $k$-neighbours $y'$ and $y_{i+1}$, or else
$i=p$ and $y_p\in Y$ and $y_i$ has the two $k$-neighbors $y'$ and $y_{i-1}$ (in case
$i=p$ and $y_p\not\in Y$ then by definition of $Y$ the node $y_i$ could not have a $k$-neighbor $y'$). By the Path Layout Lemma, $\lm(T) \leq k$. \\

Case 3: $X = \emptyset$, $Y= \emptyset$\\
If you make $P=(x)$ for some arbitrary $x\in T$, it is obvious that every connected component of $T \backslash N[P]$ has \LM $\leq k-1$. By the Path Layout Lemma, $\lm(T) \leq k$.
\\

We have proven the partial claim that if $\lm(T) \geq k+1$ then there exists a node $x \in T$ such that $D(x,k)\geq 3$; or there exists a strict subtree $S$ of $T$ with $\lm(S) \geq k+1$.
To finish the backward direction of the theorem we need to show that
if $\lm(T) \geq k+1$ then there exists a node $x \in T$ with $D(x,k) \geq 3$.
Assume for a contradiction that there is no node with $k$-component index at least 3 in $T$. By the partial claim, there must then exist a strict subtree $S$ with $\lm(S) \geq k+1$. But since we look at finite trees, we know that there in $S$ must exist a minimal subtree $S_0, \lm(S_0) = k+1$ with no strict subtree with \LM $> k$. By the partial claim, $S_0$ must contain a node $x_0$ with $D_{S_0}(x_0,k) \geq 3$. But every dangling tree $S_0\langle v,u\rangle$ is a subtree of $T\langle v,u\rangle$, and so if $D_{S_0}(x_0,k) \geq 3$, then $D_T(x_0,k) \geq 3$ contradicting our assumption.
\end{proof}

\begin{center}
\begin{figure}[ht!]
\centering
\includegraphics[width=70mm]{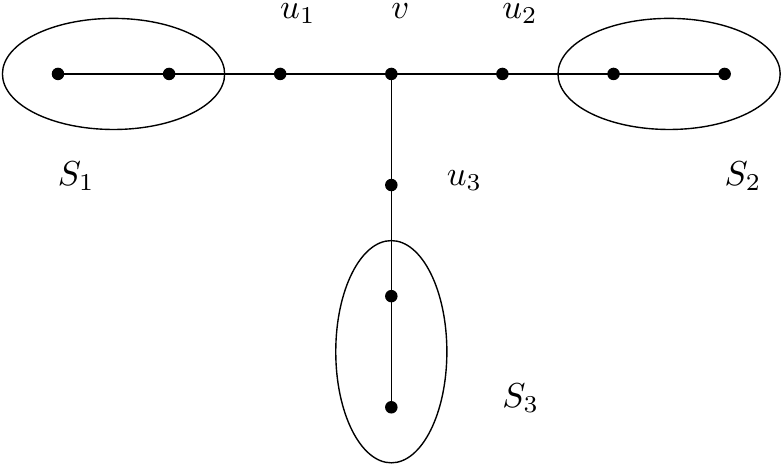}
\caption{The smallest tree with LMIM-width 2, having a node $v$ with three 1-neighbors $u_1,u_2,u_3$ having dangling trees $S_1,S_2,S_3$, respectively, so that $D(v,1)=3$ }
\end{figure}
\end{center}

By Theorem \ref{Thm: Classification theorem}, every tree with \LM $k \geq 2$ must be at least 3 times bigger than the smallest tree with \LM $k-1$, which implies the following.

\begin{remark} \label{Rmk: log lmw}
The \LM of an $n$-node tree is $\mathcal{O}(log\ n)$.
\end{remark}

\section{Rooted trees, $k$-critical nodes and labels} \label{sec:root}

Our algorithm computing LMIM-width will work on a rooted tree, processing it bottom-up.
We will choose an arbitrary node $r$ of the tree $T$ and denote by $T_r$ the tree rooted in $r$. 
For any node $x$ we denote by $T_r[x]$ the standard \emph{complete subtree} of $T_r$ rooted in $x$. During the bottom-up processing of $T_r$ we will compute a label for various subtrees. The notion of a $k$-critical node is crucial for the definition of labels.

\begin{definition}[$k$-critical node]
Let $T_r$ be a rooted tree with $\lm(T_r) = k$. We call a node $x$ in $T_r$ \textbf{$k$-critical} if it has exactly two children $v_1$ and $v_2$ that each has at least one child, $u_1$ and $u_2$ respectively, such that $\lm(T_r[u_1]) = \lm(T_r[u_2]) = k$. Thus $x$ is $k$-critical if and only if $\lm(T)=k$ and $D_{T_r[x]}(x,k)=2$.
\end{definition}


\begin{remark} \label{Rmk: single k-crit} 
If $T_r$ has LMIM-width $k$ it has at most one $k$-critical node.
\end{remark}

\begin{proof}
For a contradiction, let $x$ and $x'$ be two $k$-critical nodes in $T_r$. There are then four nodes, $v_l,v_r,v_l',v_r'$, the two $k$-neighbours of $x$ and $x'$ respectively, such that there exist dangling trees $T\langle v_l,u_l\rangle, T\langle v_r,u_r\rangle, T\langle v_l',u_l'\rangle, T\langle v_r',u_r'\rangle$ that all have \LM $k$. 
If $x$ and $x'$ have a descendant/ancestor relationship in $T_r$, then assume wlog that $x'$ is a descendant of $v_l$, and note that $T\langle v_r,u_r\rangle, T\langle v_l',u_l'\rangle$ and $T\langle v_r',u_r'\rangle$ are disjoint trees in different neighbours of $x'$, thus $D_{T_r}(x',k)=3$ and by Theorem \ref{Thm: Classification theorem} $T_r$ should have \LM $k+1$
Otherwise, all the dangling trees are disjoint, thus $D_T(x,k) = D_T(x',k) = 3$ and we arrive at the same conclusion.
\end{proof}

\begin{definition}[label]\label{Def:label}
Let rooted tree $T_r$ have $\lm(T_r) = k$. Then $\mathbf{label(T_r)}$ consists of a list of decreasing numbers, $(a_1,\ldots,a_p)$, where $a_1 = k$, appended with a string called $last\_type$, which tells us where in the tree an $a_p$-critical node lies, if it exists at all. If $p=1$ then the label is simple, otherwise it is complex. The $\mathbf{label(T_r)}$ is defined recursively, with type 0 being a base case for singletons and for stars, and with type 4 being the only one defining a complex label.

\begin{itemize}
\item{Type 0:} $r$ is a leaf, i.e. $T_r$ is a singleton, then $label(T_r) = (0,t.0)$; \\
or all children of $r$ are leaves, then $label(T_r) = (1,t.0)$
\item{Type 1:} No $k$-critical node in $T_r$, then $label(T_r) = (k,t.1)$
\item{Type 2:} $r$ is the $k$-critical node in $T_r$, then $label(T_r) = (k,t.2)$
\item{Type 3:} A child of $r$ is $k$-critical in $T_r$, then $label(T_r) = (k,t.3)$
\item{Type 4:} There is a $k$-critical node $u_k$ in $T_r$ that is neither $r$ nor a child of $r$. Let $w$ be the parent of $u_k$. Then $label(T_r) = k \oplus label(T_r\backslash T_r[w])$ 
\end{itemize}
\end{definition}

In type 4 we note that $\lm(T_r\backslash T_r[w]) < k$ since otherwise $u_k$ would have three $k$-neighbors (two children in the tree and also its parent) and by Theorem \ref{Thm: Classification theorem} we would then have $\lm(T_r) = k+1$. Therefore, all numbers in $label(T_r\backslash T_r[w])$ are smaller than $k$ and a complex label is a list of decreasing numbers followed by $last\_type \in \{t.0, t.1, t.2, t.3\}$. We now give a Proposition that for any node $x$ in $T_r$ will be used to compute $label(T_r[x])$ based on the labels of the subtrees rooted at the children and grand-children of $x$. The subroutine underlying this Proposition, see the decision tree in Figure \ref{diagram}, will be used when reaching node $x$ in the bottom-up processing of $T_r$.

\begin{proposition} \label{Prop: Case}
Let $x$ be a node of $T_r$ with children $Child(x)$, and given $label(T_r[v])$
for all $v \in Child(x)$. We define (and compute) $k = max_{v \in Child(x)}\ \{\lm(T_r[v])\}$ and $N_k = \{v \in Child(x)\ |\ \lm(T[v]) = k\}$ and denote by $N_k=\{v_1,\ldots,v_q\}$ and by $l_i = label(T_r[v_i])$.
Define (compute) $t_k = D_{T_r[x]}(x,k)$ by noting that $t_k = |\{v_i \in N_k\ |\ v_i\ \mathrm{has\ child}\ u_j\ \mathrm{with}\ \lm(T_r[u_j]) = k\}|$. Given this information, we can find $label(T_r[x])$ as follows:
\begin{itemize}
\item \textbf{Case 0:} if $|Child(x)| = 0$ then $label(T_r[x]) = (0,t.0)$; \\
else if $k = 0$ then $label(T_r[x]) = (1,t.0)$
\item \textbf{Case 1:} Every label in $N_k$ is simple and has $last\_type$ equal to $t.1$ or $t.0$, and $t_k \leq 1$. Then, $label(T_r[x]) = (k,t.1)$
\item \textbf{Case 2:} Every label in $N_k$ is simple and has $last\_type$ equal to $t.1$ or $t.0$, but $t_k = 2$. Then, $label(T_r[x]) = (k,t.2)$
\item \textbf{Case 3:} Every label in $N_k$ is simple and has $last\_type$ equal to $t.1$ or $t.0$, but $t_k \geq 3$. Then, $label(T_r[x]) = (k+1,t.1)$
\item \textbf{Case 4:} $|N_k| \geq 2$ and for some $v_i \in N_k$, either $l_i$ is a complex label, or $l_i$ has $last\_type$ equal to either $t.2$ or $t.3$. Then, $label(T_r[x]) = (k+1,t.1)$
\item \textbf{Case 5:} $|N_k| = 1$, $l_1$ is a simple label and $l_1$ has $last\_type$ equal to $t.2$. Then, $label(T_r[x]) = (k,t.3)$
\item \textbf{Case 6:} $|N_k| = 1$, $l_1$ is either complex or has $last\_type$ equal to $t.3$, and $k \not \in label(T_r[x]\backslash T_r[w])$, where $w$ is the parent of the $k$-critical node in $T_r[v_1]$. Then, $label(T_r[x]) = k \oplus label(T_r[x]\backslash T_r[w])$
\item \textbf{Case 7:} $|N_k| = 1$, $l_1$ is either complex or has $last\_type$ equal to $t.3$, and $k \in label(T_r[x]\backslash T_r[w])$, where $w$ is the parent of the $k$-critical node in $T_r[v_1]$. Then, $label(T_r[x]) = (k+1,t.1)$
\end{itemize}
\end{proposition}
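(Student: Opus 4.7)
The plan is to verify each case by pinning down $\lm(T_r[x])$ via the Classification Theorem (Theorem \ref{Thm: Classification theorem}) for the lower bound and the Path Layout Lemma (Lemma \ref{Path Layout Lemma}) applied with the single-vertex path $P=(x)$ for the upper bound $\lm(T_r[x])\leq k+1$, and then locating the unique critical node (Remark \ref{Rmk: single k-crit}) that determines the $last\_type$. The value $D_{T_r[x]}(y,k)$ for any candidate witness $y$ splits naturally into what is already summarized in the children's labels and what is contributed by the ``upward'' dangling tree from $y$ in $T_r[x]$; the case hypotheses are designed precisely to pin down which of these pushes $D$ to $3$.

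Case 0 is immediate from the definitions. For Cases 1--3, the hypothesis that every label in $N_k$ is simple of type t.0 or t.1 means no proper descendant of $x$ is $k$-critical, so the only node of $T_r[x]$ that can witness $\lm\geq k+1$ is $x$ itself, and it does so exactly when $D_{T_r[x]}(x,k)=t_k\geq 3$. Combined with the Path Layout upper bound this yields $\lm(T_r[x])=k$ when $t_k\leq 2$, with $x$ being $k$-critical iff $t_k=2$ (distinguishing Cases 1 and 2), and $\lm(T_r[x])=k+1$ with no $(k+1)$-critical node (no subtree rooted at a child reaches $\lm=k+1$) when $t_k\geq 3$ (Case 3).

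For Case 4, two children $v_i,v_j\in N_k$ both attain $\lm=k$ and one of them contains an internal $k$-critical node $y$ (which is $v_i$, a child of $v_i$, or deeper, according to $last\_type$). Already $y$ has two $k$-neighbors inside $T_r[v_i]$, and its upward dangling tree in $T_r[x]$ additionally contains $T_r[v_j]$ of $\lm=k$, supplying a third $k$-neighbor. Classification and Path Layout then give $\lm(T_r[x])=k+1$ with no $(k+1)$-critical node, producing $(k+1,t.1)$. Case 5 is the single-$N_k$ analogue: $v_1$ is already $k$-critical in $T_r[v_1]$, the upward dangling from $v_1$ inside $T_r[x]$ has $\lm<k$ (since the strictly smaller labels of the other children preclude a $k$-matching crossing $x$), and thus $v_1$ remains $k$-critical as a child of the root, giving the label $(k,t.3)$.

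The main obstacle is Cases 6 and 7, which drive the recursive complex-label machinery. Here $l_1$ already records, as its suffix, $label(T_r[v_1]\setminus T_r[w])$ for $w$ the parent of a deep $k$-critical node $u_k$, with all entries strictly below $k$. Moving from $v_1$ to $x$ enlarges the remainder to $T_r[x]\setminus T_r[w]$, whose label must be computed by a recursive invocation of the same case analysis on a strictly smaller tree. In Case 6, $k\notin label(T_r[x]\setminus T_r[w])$ means $\lm(T_r[x]\setminus T_r[w])<k$, so $u_k$ retains exactly two $k$-neighbors, and Classification plus Path Layout confirm $\lm(T_r[x])=k$ with $u_k$ still the unique $k$-critical node, yielding the complex label $k\oplus label(T_r[x]\setminus T_r[w])$. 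In Case 7, $k\in label(T_r[x]\setminus T_r[w])$ supplies $u_k$ with a third $k$-neighbor via $w$, and Classification together with Path Layout force $\lm(T_r[x])=k+1$ with label $(k+1,t.1)$. The delicate point is maintaining the complex-label invariant across this recursion, namely that the suffix at each level is itself a valid label produced by this same proposition on the remainder, with termination guaranteed by the strict decrease in remainder size.
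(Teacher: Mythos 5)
Your proposal is correct and follows essentially the same route as the paper: an exhaustive case analysis that bounds $\lm(T_r[x])\in\{k,k+1\}$ from above via the Path Layout Lemma and from below via Theorem \ref{Thm: Classification theorem} (exhibiting a node with $k$-component index $3$), and then locates the unique critical node to determine the $last\_type$. In particular, your treatment of Cases 4, 6 and 7 --- deciding whether the parent $w$ of the deep $k$-critical node acquires a third $k$-neighbour through the part of $T_r[x]$ outside $T_r[w]$ --- is exactly the paper's argument.
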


\begin{center}
\begin{figure}[ht!]
\centering
\includegraphics[width=120mm]{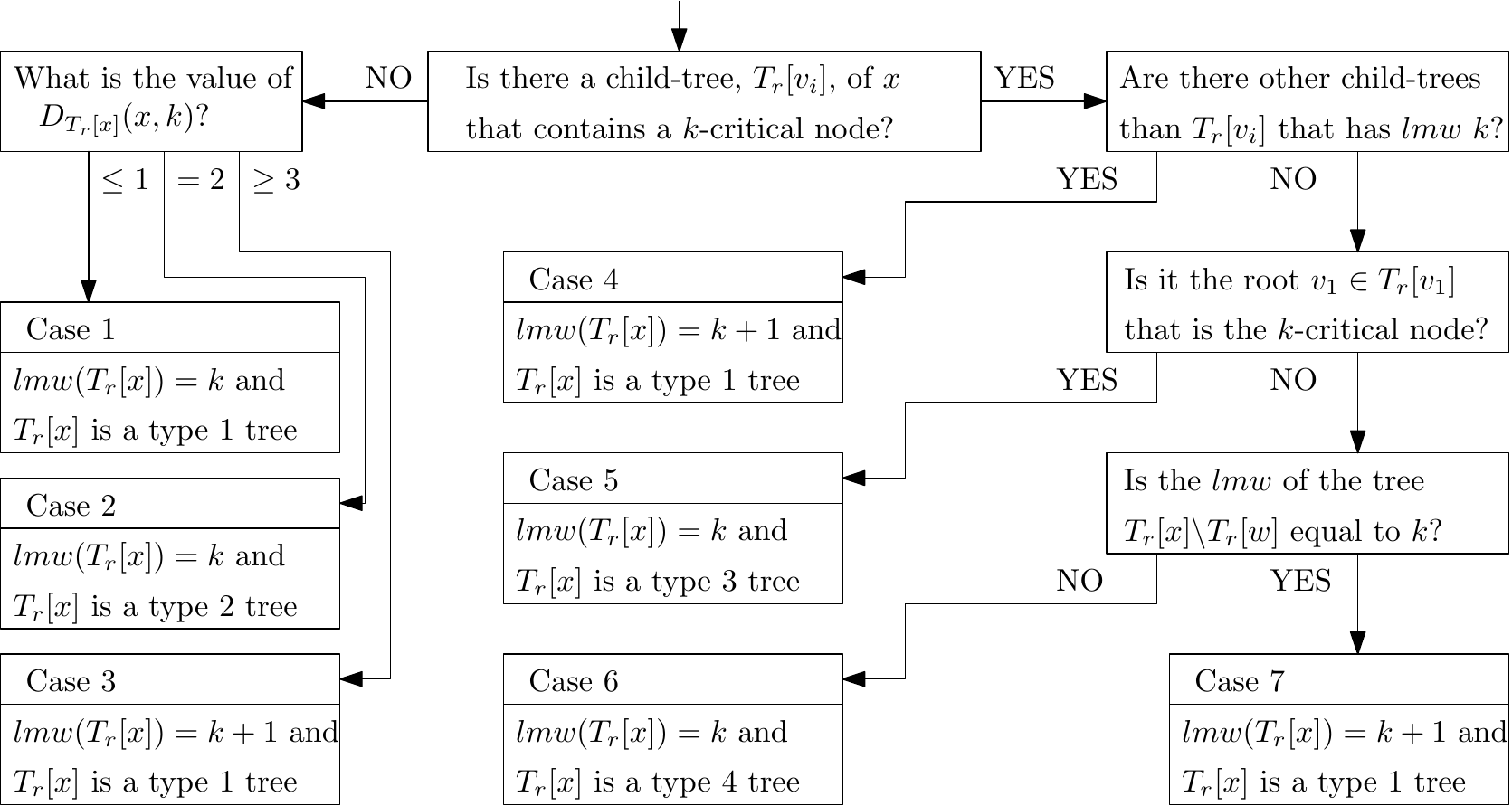}
\caption{A decision tree corresponding to the case analysis of Proposition \ref{Prop: Case}}
\label{diagram}
\end{figure}
\end{center}
\begin{proof}
We show that exactly one case applies to every rooted tree and in each case we assign the label according to Definition \ref{Def:label}. First the base case: either $x$ is a leaf or all its children are leaves and we are in Case 0 and the label is assigned according to Def. \ref{Def:label}. Otherwise, observe the decision tree in Figure \ref{diagram}. It follows from Def. \ref{Def:label}, $k$, $N_k$ and $t_k$ that cases 1 up to 7 of Prop. \ref{Prop: Case} corresponds to cases 1 up to 7 in the decision tree - we mention this correspondence in the below - and this proves that exactly one case applies to every rooted tree.
The following facts simplify the case analysis: 
$\lm(T_r[x])$ is equal to either $k$ or $k+1$, and since no subtree rooted in a child of $x$ has LMIM-width $k+1$ there cannot be any $(k+1)$-critical node in $T_r[x]$, therefore if $\lm(T_r[x]) = k+1$, $T_r[x]$ is always a type 1 tree and by Theorem \ref{Thm: Classification theorem} it must contain a node $v$ such that $D_{T_r[x]}(v,k) >= 3$. This node must either be a $k$-critical node in a rooted subtree of $T_r[x]$, or $x$ itself. We go through the cases 1 to 7 in order.\\
Note that in Cases 1, 2, and 3 the condition 'Every label in $N_k$ is simple and has $last\_type$ equal to $t.1$ or $t.0$' means there are no $k$-critical nodes in any subtree of $T_r[x]$, because every $T_r[v]$ for $v \in Child(x)$ is either of type 1 or has \LM $< k$:\\
\textbf{Case 1:} By definition of $t_k$, $D_{T_r[x]}(x,k) \leq 1$. Therefore, $\lm(T_r[x]) = k$, and $T_r[x]$ is a type 1 tree.\\
\textbf{Case 2:} By definition of $t_k$, $D_{T_r[x]}(x,k) = 2$, and no other nodes are $k$-critical, therefore $\lm(T_r[x]) = k$. But now $x$ is $k$-critical in $T_r[x]$ so $T_r[x]$ is a type 2 tree.\\
\textbf{Case 3:} By definition of $t_k$, $D_{T_r[x]}(x,k) = 3$ and $\lm(T_r[x]) = k+1$.\\
For the remaining Cases 4, 5, 6 and 7, some $T_r[v]$ for $v \in Child(x)$ has \LM $k$ and is of type 2, 3 or 4, so at least one $k$-critical node exists in some subtree of $T_r[x]$:\\
\textbf{Case 4:} There is a $k$-critical node $u_k$ in some $T_r[v_i]$ (not of type 1), and some other $v_j$ has $\lm(T_r[v_j]) = k$ (because $|N_k| \geq 2$). Now observe $w$ the parent of $u_k$. The dangling tree $T_r[x]\backslash T_r[w]$ is a supertree of $T_r[v_j]$ and thus has \LM $\geq k$. Therefore $w$ is a $k$-neighbour of $u_k$ and by Theorem \ref{Thm: Classification theorem} $\lm(T_r[x]) = k+1$.\\
\textbf{Case 5:} $x$ has only one child $v$ with $\lm(T_r[v]) = k$, and $v$ is itself $k$-critical ($T_r[v]$ is type 2). $x$ cannot be a $k$-neighbour of $v$ in the unrooted $T_r[x]$, because every dangling tree from $x$ is some $T_r[v_i], v_i \neq v$ of $x$, which we know has \LM $< k$. Since no other node in $T$ is $k$-critical, $\lm(T_r[x]) = k$, and since $v$, a child of $x$, is $k$-critical in $T_r[x]$, $T_r[x]$ is a type 3 tree.\\
\textbf{Case 6:} $x$ has only one child $v$ with $\lm(T_r[v]) = k$, and there is a $k$-critical node $u_k$ with parent $w$ – neither of which are equal to $x$ – in $T_r[v]$ ($T_r[v]$ is a type 3 or type 4 tree). Moreover, no tree rooted in another child of $w$, apart from $u_k$, can have \LM $\geq k$, since this would imply $D_{T_r[v]}(u_k,k) = 3$ and thus $\lm(T_r[v]) > k$; nor can $T_r[x]\backslash T_r[w]$ have \LM $= k$, since then we would have $k$ in $label(T_r[x]\backslash T_r[w])$ disagreeing with the condition of Case 6. Therefore $D_{T_r[x]}(u,k) = 2$, and $\lm(T_r[x]) = k$. $T_r[x]$ is thus a type 4 tree and the label is assigned according to the definition.\\
\textbf{Case 7:} $T_r[v]$, $u_k$ and $w$ are as described in Case 6. But here, $\lm(T_r[x]\backslash T_r[w]) = k$ (since the condition says that $k$ is in its label), and thus $w$ is a $k$-neighbour of its child $u_k$ and by Theorem \ref{Thm: Classification theorem} $\lm(T_r[x]) = k+1$.\\
We conclude that $label(T_r[x])$ has been assigned the correct value in all possible cases.
\end{proof}

\begin{center}
\begin{figure}[ht!]
\centering
\includegraphics[width=120mm]{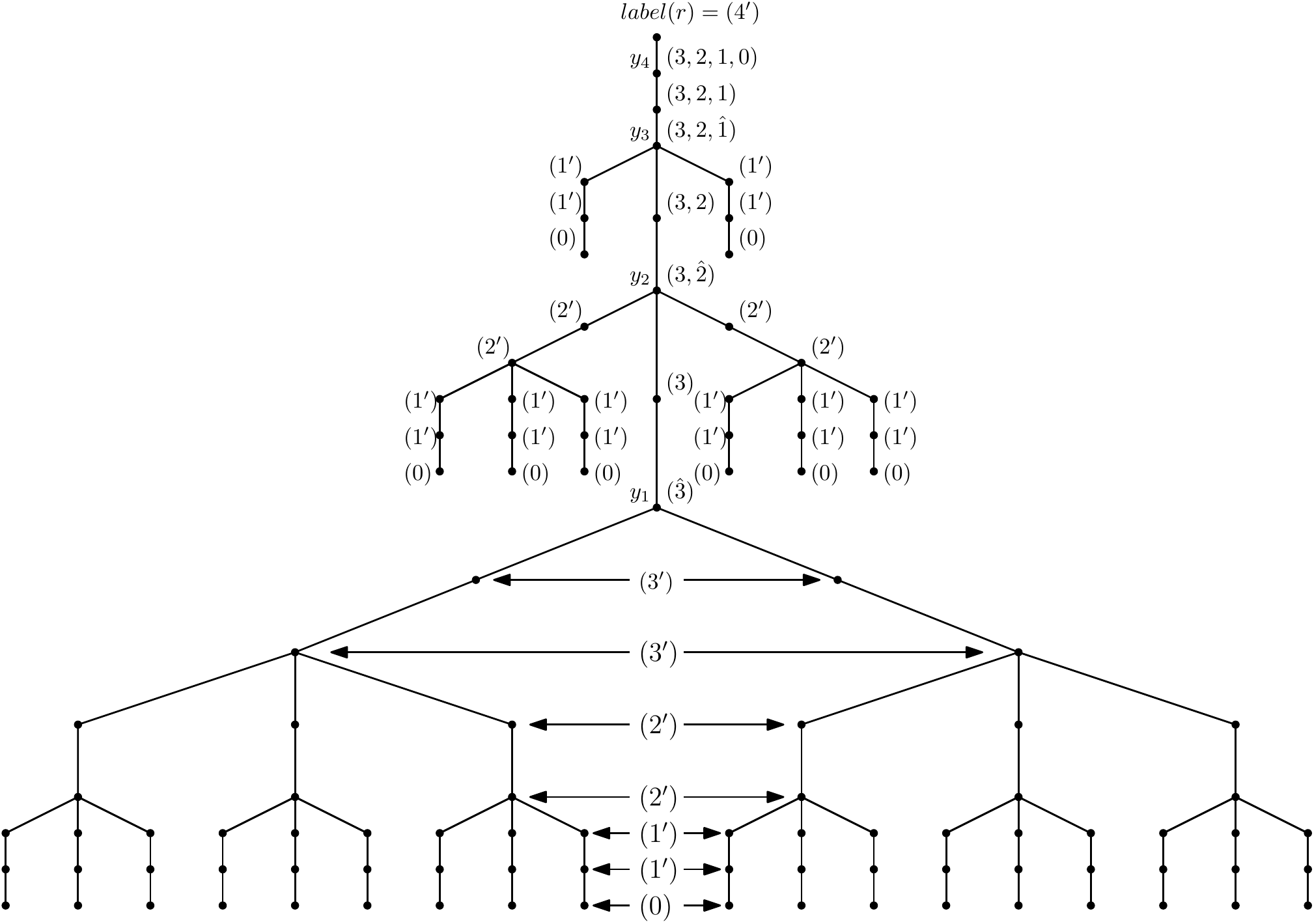}
\caption{A rooted tree of LMIM-width 4 with labels of subtrees. We explain the labels $(3,t.2),(3,t.3),(3,2,t.2)$ assigned to subtrees rooted at the nodes we call $a,b,c$, with parent$(a)=b$ and parent$(b)=c$. The sub-tree rooted at $a$, with label $(3,t.2)$ has precisely two children that have a child-tree each of LMIM-width 3, hence $a$ is $3$-critical and it is a type $2$ tree (Case 2 of Prop. \ref{Prop: Case}).  The sub-tree rooted at $b$, labelled $(3,t.3)$, is thus the parent of a $3$-critical node, and so it is of type $3$ (Case 5 of Prop. \ref{Prop: Case}). The sub-tree rooted at $c$ with label $(3,2,t.2)$ has maximum LMIM-width of a child-tree being 3, and it has a 3-critical node $a$ which is neither $c$ nor a child of $c$, so it is of type 4 (Case 6 of Prop. \ref{Prop: Case}); and moreover the subtree $T_r[c] \setminus T_r[a]$ has LMIM-width 2 with node $c$ as $2$-critical so it is of type 2 (Case 2 of Prop. \ref{Prop: Case}), and the label of $T_r[c]$ becomes $3 \oplus (2,t.2)$.}
\label{treewithlabels}
\end{figure}
\end{center}

\section{Computing LMIM-width of Trees and Finding a Layout} \label{sec:comp}

The subroutine underlying Prop. \ref{Prop: Case} will be used in a bottom-up algorithm that starts out at the leaves and works its way up to the root, computing labels of subtrees $T_r[x]$. However, in two cases (Case 6 and 7) we need the label of $T_r[x] \backslash T_r[w]$, which is not a complete subtree rooted in any node of $T_r$. Note that the label of $T_r[x] \backslash T_r[w]$ is again given by a (recursive) call to Prop. \ref{Prop: Case} and is then stored as a suffix of the complex label of $T_r[x]$. We will compute these labels by iteratively calling Prop. \ref{Prop: Case} (substituting the recursion by iteration).
We first need to carefully define the subtrees involved when dealing with complex labels.

From the definition of labels it is clear that only type 4 trees lead to a complex label. In that case we have a tree $T_r[x]$ of LMIM-width $k$ and a $k$-critical node $u_k$ that is neither $x$ nor a child of $x$, and the recursive definition gives $label(T_r[x]) = k \oplus label(T_r[x]\backslash T_r[w])$ for $w$ the parent of $u_k$.
Unravelling this recursive definition, this means that if $label(T_r[x]) = (a_1,\ldots,a_p,last\_type)$, we can define a list of nodes $(w_1,\ldots,w_{p-1})$ where $w_i$ is the parent of an $a_i$-critical node in $T_r[x]\backslash(T_r[w_1]\cup\ldots\cup T_r[w_{i-1}])$. We expand this list with $w_p = x$, such that there is one node in $T_r[x]$ corresponding to each number in $label(T_r[x])$, and $T_r[x]\backslash(T_r[w_1]\cup\ldots\cup T_r[w_p]) = \emptyset$.

Now, in the first level of a recursive call to Prop. \ref{Prop: Case} the role of $T_r[x]$ is taken by $T_r[x]\backslash T_r[w_1]$, and in the next level it is taken by $(T_r[x]\backslash T_r[w_1])\backslash T_r[w_2]$ etc. The following definition gives a shorthand for denoting these trees.

\begin{definition} 
Let $x$ be a node in $T_r$, $label(T_r[x])=(a_1,a_2,\dots,a_p,last\_type)$ and the corresponding list of vertices $(w_1,\dots, w_p)$ is as we describe in the above text. For any non-negative integer $s$, the tree $\mathbf{T_r[x,s]}$ is the subtree of $T_r[x]$ obtained by removing all trees $T_r[w_i]$ from $T_r[x]$, where $a_i\geq s$. In other words, if $q$ is such that $a_q \geq s > a_{q+1}$, then
$T_r[x,s] = T_r[x]\backslash(T_r[w_1]\cup T_r[w_2]\cup\ldots\cup T_r[w_q])$
\end{definition}

\begin{remark} \label{Rmk: Subtrees by labels}
Some important properties of $T_r[x,s]$ are the following. Let $T_r[x,s]$, $label(T_r[x,s])$, $(w_1,\ldots,w_p)$ and $q$ as in the definition. Then
\begin{enumerate}
\item if $s > a_1$, then $T_r[x,s] = T_r[x]$  \label{Rmk: same}
   \item  $label(T_r[x,s]) = (a_{q+1},\ldots,a_p,last\_type)$ \label{Rmk: Label-subset}
   \item $\lm(T_r[x,s]) = a_{q+1} < s$  \label{Cor: Subtree LMW-bound}
   \item $\lm(T_r[x,s+1]) = s$ \emph{if and only if} $s \in label(T_r[x])$ \label{Cor: Max Subtree LMW}
    \item $T_r[x,s+1] \neq T_r[x,s]$ if and only if $s \in label(T_r[x])$ \label{Rmk: Different s and s+1}
\end{enumerate}
\end{remark}

\begin{proof}
These follow from the definitions, maybe the last one requires a proof:\\
\emph{Backward direction:} Let $s = a_q$ for some $1 \leq q \leq p$. Then 
$T_r[x,s+1] = T_r[x]\backslash(T_r[w_1]\cup\ldots\cup T_r[w_{q-1}])$ and $T_r[x,s] = T_r[x]\backslash(T_r[w_1]\cup\ldots\cup T_r[w_q])$. These two trees are clearly different.\\
\emph{Forward direction:} Let $T_r[x,s] = T_r[x]\backslash(T_r[w_1]\cup\ldots\cup T_r[w_q])$ and $T_r[x,s+1] = T_r[x]\backslash(T_r[w_1]\cup\ldots\cup T_r[w_{q'}])$ with $q' < q$ and $a_{q'} > a_q$ (because numbers in a label are strictly descending). $a_q < s+1$ and $a_q \geq s$, ergo $a_q = s$.
\end{proof} 

Note that for any $s$ the tree $T_r[x,s]$ is defined only after we know $label(T_r[x])$. In the algorithm, we compute $label(T_r[x])$ by iterating over increasing values of $s$ (until $s > \lm(T_r[x])$ since by Remark \ref{Rmk: Subtrees by labels}.\ref{Rmk: same} we then have $T_r[x,s] = T_r[x]$) and we could hope for a loop invariant saying that we have correctly computed $label(T_r[x,s])$. However, $T_r[x,s]$ is only known once we are done. Instead, each iteration of the loop will correctly compute the label of the following subtree called $T_{union}[x,s]$, which is not always equal to $T_r[x]$, but importantly for $s > \lm(T_r[x])$, we will have $T_{union}[x,s] = T_r[x,s] = T_r[x]$.

\begin{definition} \label{Def: Union}
Let $x$ be a node in $T_r$ with children $v_1,\ldots,v_d$. $T_{union}[x,s]$ is then equal to the tree induced by $x$ and the union of all $T_r[v_i,s]$ for $1 \leq i \leq d$. More technically, $T_{union}[x,s] = T_r[V']$ where $V' = x\cup V(T_r[v_1,s])\cup\ldots\cup V(T_r[v_d,s])$. 
\end{definition}

Given a tree $T$, we find its \LM by rooting it in an arbitrary node $r$, and computing labels by processing $T_r$ bottom-up. The answer is given by the first element of $label(T_r[r])$, which by definition is equal to $\lm(T)$. At a leaf $x$ of $T_r$ we initialize by $label(T_r[x]) \gets (0,t.0)$, and at a node $x$ for which all children are leaves we initialize by $label(T_r[x]) \gets (1,t.0)$, according to Definition \ref{Def:label}. When reaching a higher node $x$ we compute label of $T_r[x]$ by calling function \textsc{MakeLabel}$(T_r,x)$.

\begin{algorithmic}
\Function{MakeLabel}{$T_r$, $x$} 
\Comment{finds $cur\_label=label(T_r[x])$}
	\State $cur\_label \gets (0,t.0)$
	\Comment{This is $label(T_{union}[x,0])$}
   \State $\{v_1,\ldots,v_d\} = $ children of $x$
	\If{$0 \in label(T_r[v_i])$ for some $i$}
		\State $cur\_label \gets (1,t.0)$
		\Comment{This is then $label(T_{union}[x,1])$}
	\EndIf
	\For{$s \gets 1,\max_{i=1}^d\{\text{first element of } label(T_r[v_i])\}$}
		\State $\{l_1',\ldots,l_d'\} = \{label(T_r[v_i,s+1])\ |\ 1 \leq i \leq d\}$
		\State $N_s = \{v_i\ |\ 1 \leq i \leq d,\ s \in l_i'\}$
		\State $t_s = |\{v_i\ |\ v_i \in N_s,\ v_i$ has child $u_j$ s.t. $s \in label(T_r[u_j,s+1])\}|$
		\If{$|N_s| > 0$}
			\State $case \gets$ the case from Prop. \ref{Prop: Case} applying  to $s$, $\{l_1',\ldots,l_d'\}$, $N_s$ and $t_s$
			\State $cur\_label \gets$ as given by $case$ in Prop. \ref{Prop: Case} ($s  \oplus  cur\_label$ if Case 6)
		\EndIf
	\EndFor
\EndFunction
\end{algorithmic}
\begin{center}
\begin{figure}[ht!]
\centering
\includegraphics[width=100mm]{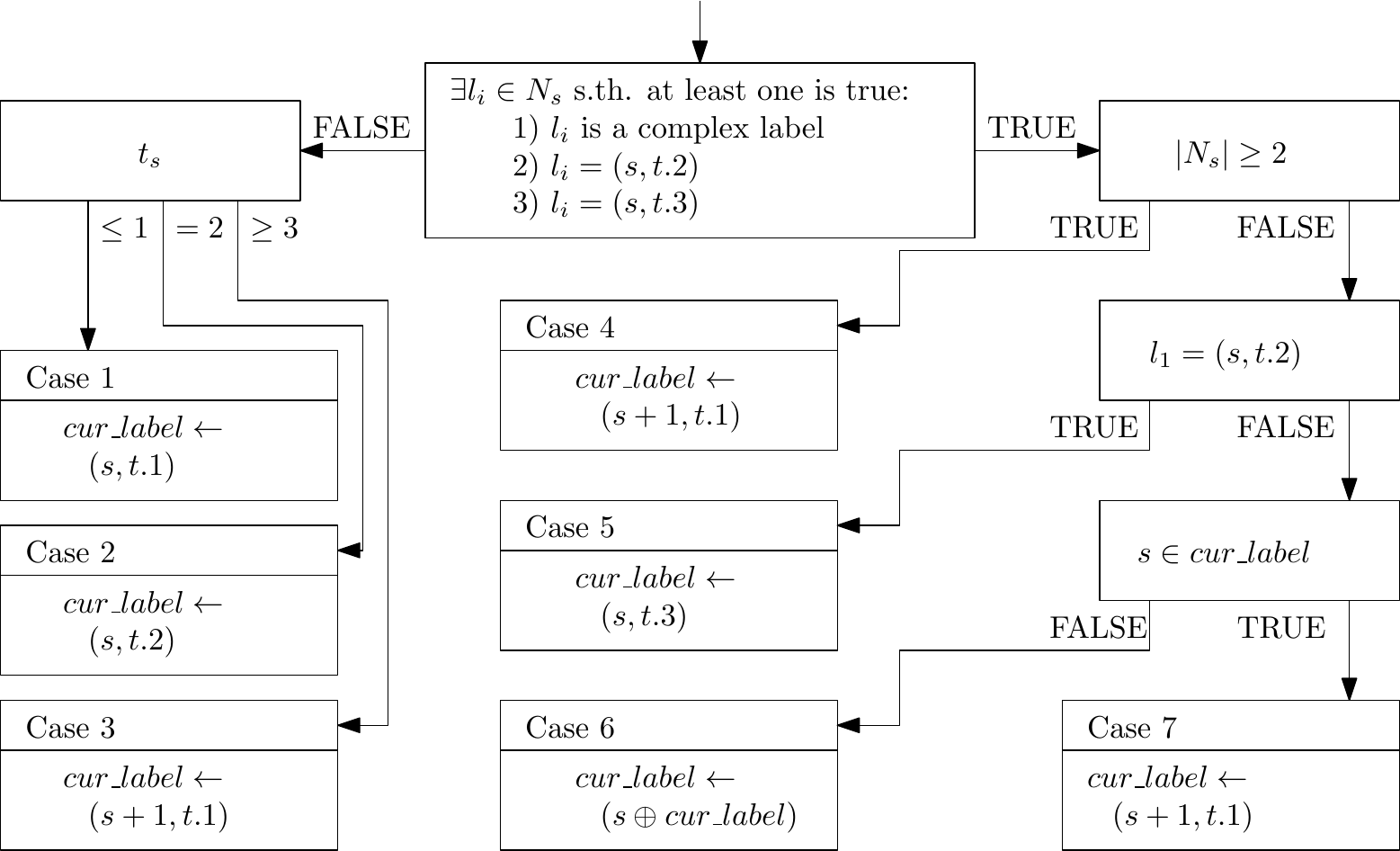}
\caption{The same decision tree as shown in Prop. \ref{Prop: Case}, but adapted to \textsc{MakeLabel}}
\label{fig:dtree}
\end{figure}
\end{center}

\begin{lemma} \label{Lemma: MakeLabel correctness}
Given labels at descendants of node $x$ in $T_r$, \textsc{MakeLabel}($T_r,x$) computes $label(T_r[x])$ as the value of $cur\_label$.
\end{lemma}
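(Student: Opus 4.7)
The plan is induction on the loop iteration, with invariant that after the iteration with loop variable $s$ (equivalently, before entering iteration $s{+}1$) we have $cur\_label = label(T_{union}[x,s+1])$. Once this is established, Remark \ref{Rmk: Subtrees by labels}.\ref{Rmk: same} applied to each child, $T_r[v_i, K+1] = T_r[v_i]$ for $K := \max_i \lm(T_r[v_i])$, gives $T_{union}[x, K+1] = T_r[x]$, so after the final iteration $cur\_label = label(T_r[x])$, as desired.

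First I would dispatch the base case covering the code before the loop. If no child of $x$ satisfies $0 \in label(T_r[v_i])$, then no child is a singleton leaf and every $T_r[v_i,1]$ contributes nothing to $T_{union}[x,1]$, which is therefore just $\{x\}$ with label $(0,t.0)$. If some child is a leaf, the contributing $T_r[v_i,1]$'s are exactly the leaf children, and $T_{union}[x,1]$ is a star centered at $x$ with label $(1,t.0)$. Either way, the initial assignment of $cur\_label$ matches.

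For the inductive step, the core claim is that iteration $s$ is a syntactic application of Proposition \ref{Prop: Case} to the rooted tree $T_{union}[x,s+1]$, where the subtree at each remaining child $v_i$ of $x$ is $T_r[v_i, s+1]$ with label $l_i'$. By Remark \ref{Rmk: Subtrees by labels}.\ref{Cor: Subtree LMW-bound} the first entry of each $l_i'$ is at most $s$, and by Remark \ref{Rmk: Subtrees by labels}.\ref{Cor: Max Subtree LMW} it equals $s$ precisely when $s \in label(T_r[v_i])$, i.e.\ precisely when $v_i \in N_s$. Hence the $k$, $N_k$, $t_k$ of Proposition \ref{Prop: Case} are exactly the $s$, $N_s$, $t_s$ computed by the algorithm, and Cases 0--5 and 7 assign $cur\_label$ directly per the proposition. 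Case 6 prepends $s$ to $label(T_{union}[x,s+1] \setminus T_{union}[x,s+1][w])$, and by the inductive hypothesis this is precisely the value stored in $cur\_label$ just before the iteration, which matches the ``$s \oplus cur\_label$'' action of the algorithm.

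The hard point will be verifying the set-theoretic identity underlying Case 6, namely $T_{union}[x,s+1] \setminus T_{union}[x,s+1][w] = T_{union}[x,s]$. I expect to prove this by analysing each child separately: for $v_i \notin N_s$, Remark \ref{Rmk: Subtrees by labels}.\ref{Rmk: Different s and s+1} gives $T_r[v_i,s+1] = T_r[v_i,s]$, so those children contribute identically on both sides; for the unique $v_1 \in N_s$ in the Case 6 branch, the extra nodes in $T_r[v_1,s+1]$ compared to $T_r[v_1,s]$ are exactly the nodes of $T_r[w]$ (where $w$ is the parent of the $s$-critical node inside $T_r[v_1,s+1]$), and this is precisely the subtree that Case 6 removes. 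Combined with the fact that $x \in T_{union}[x,s]$ as a singleton, this yields the required equality and closes the induction.
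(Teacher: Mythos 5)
Your proposal is correct and follows essentially the same route as the paper's proof: the same loop invariant $cur\_label = label(T_{union}[x,s+1])$, the same singleton/star base case, the same correspondence table mapping $s$, $N_s$, $t_s$, $cur\_label$ onto the quantities of Proposition \ref{Prop: Case}, and the same key identity $T_{union}[x,s+1]\backslash T_r[w,s+1] = T_{union}[x,s]$ for Case 6. The only place the paper is more explicit is in verifying $t_s = D_{T_{union}[x,s+1]}(x,s)$ (checking that $T_r[v_i,s+1]\cap T_r[u_j] = T_r[u_j,s+1]$ so grandchild labels are read off the right subtrees), which you assert somewhat quickly but which follows from the same remarks you invoke.
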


%

\begin{proof}
Assume that $x$ has the children $v_1,\ldots,v_d$, and denote their set of labels as $L = \{l_1,\ldots,l_d\}$.
\textsc{MakeLabel} keeps a variable $cur\_label$ that is updated maximally $k$ times in a for loop, where $k$ is the biggest number in any label of children of $x$. The following claim will suffice to prove the lemma, since for $s > \lm(T_r[x])$, we have $T_{union}[x,s] = T_r[x]$..\\

Claim: At the end of the $s$'th iteration of the for loop the value of $cur\_label$ is equal to $label(T_{union}[x,s+1])$. \\

\emph{Base case}: We have to show that before the first iteration of the loop we have $cur\_label=label(T_{union}[x,1])$. If some label $l_i\in L$ has 0 as an element then $T_{union}[x,1]$ is isomorphic to a star with $x$ as the center and $v_i$ as a leaf. By Prop. \ref{Prop: Case}, in this case $label(T_{union}[x,1])=(1,t.0)$ and this is what $cur\_label$ is initialized to.
If no $l_i\in L$ has $0$ as an element, then by Remark \ref{Rmk: Subtrees by labels}.\ref{Rmk: Different s and s+1} $T_{union}[x,1]=T_{union}[x,0]$ which by definition is the singleton node $x$ and by Prop. \ref{Prop: Case} the label of this tree is $(0,t.0)$ and this is what $cur\_label$ is initialized to.

\emph{Induction step}: We assume $cur\_label = label(T_{union}[x,s])$ at the start of the $s$'th iteration of the for loop and show that at the end of the iteration, $cur\_label = label(T_{union}[x,s+1])$.\\
The first thing done in the for loop is the computation of
$\{l_i'\ |\ 1 \leq i \leq d,\ l_i' = label(T_r[v_i,s+1])\}$. By Remark \ref{Rmk: Subtrees by labels}.\ref{Rmk: Label-subset}, $label(T_r[v_i,s+1]) \subseteq label(T_r[v_i])$ for all $i$, therefore $l_1',\ldots,l_d'$ are trivial to compute.
The second thing done is to set $N_s$ as the set of all children of $x$ whose labels contain $s$, and $t_s$ as the number of nodes in $N_s$ that themselves have children whose labels contain $s$.
Let us first look at what happens when $|N_s| = 0$:\\
By Remark \ref{Rmk: Subtrees by labels}.\ref{Rmk: Different s and s+1}, for every child $v_i$ of $x$, $T_r[v_i,s+1] = T_r[v_i,s]$ if $s \not \in label(T_r[v_i])$. Therefore, if $|N_s| = 0$, then $T_{union}[x,s+1] = T_{union}[x,s]$, and from the induction assumption, $label(T_{union}[x,s+1]) = cur\_label$, and indeed when $|N_s|=0$ then iteration $s$ of the loop does not alter $cur\_label$. \\
Otherwise, we have $|N_s| > 0$ and make a call to the subroutine given by Prop. \ref{Prop: Case}, see the decision tree in Figure \ref{fig:dtree}, to compute $label(T_{union}[x,s+1])$ and argue first that the variables used in that call correspond to the variables used in Prop. \ref{Prop: Case} to compute $label(T_r[x])$. The correspondence is given in Table \ref{tabcor}.
\begin{table}[]\label{tabcor}
\begin{tabular}{|l|l|l|}
\hline
Proposition \ref{Prop: Case} & for loop iteration $s$& Explanation\\
\hline
$T_r[x], k$ &  $T_{union}[x,s+1], s$  & Tree needing label, max $\lm$ of children\\
$T_r[v_1],...,T_r[v_d]$ &  $T_r[v_i,s],...,T_r[v_d,s]$ & Subtrees of children \\
$l_1,...,l_d, N_k, t_k$ & $l_1',...,l_d',N_s,  t_s$ & Child labels, those with max, root comp. index\\
$label(T_r[x]\backslash T_r[w])$ & $cur\_label$ & This is also $label(T_{union}[x,s+1]\backslash T_r[w,s+1])$\\
 \hline
\end{tabular}
\end{table}
Most of these are just observations: $T_{union}[x,s+1]$ corresponds to $T_r[x]$ in Prop. \ref{Prop: Case}, and $T_r[v_1,s+1],\ldots,T_r[v_d,s+1]$ corresponds to $T_r[v_1],\ldots,T_r[v_d]$.
$\{l_i'\ |\ 1 \leq i \leq d,\ l_i' = label(T_r[v_i,s+1])\}$ correspond to $\{label(T_r[v])\ |\ v \in Child\}$ in Prop. \ref{Prop: Case}.
$N_s$ is defined in the algorithm so that it corresponds to $N_k$ in Prop. \ref{Prop: Case}. Since $|N_s| > 0$, some $v_i$ has $s$ in its label $l_i'$.
By Remark \ref{Rmk: Subtrees by labels}.\ref{Cor: Subtree LMW-bound} and \ref{Rmk: Subtrees by labels}.\ref{Cor: Max Subtree LMW}, we can infer that $s$ is the maximum \LM of all $T_r[v_i,s+1]$, therefore $s$ corresponds to $k$ in Proposition \ref{Prop: Case}.\\
It takes a bit more effort to show that $t_s$ computed in iteration $s$ of the for loop corresponds to $t_k = D_{T_r[x]}(x,k)$ in Prop. \ref{Prop: Case} – meaning we need to show that $t_s = D_{T_{union}[x,s+1]}(x,s)$. Consider $v_i$, a child of $x$. In accordance with \textsc{MakeLabel} we
say that $v_i$ \emph{contributes to} $t_s$ if $v_i \in N_s$ and $v_i$ has a child $u_j$ with $s$ in its label. We thus need to show that $v_i$ contributes to $t_s$ if and only if $v_i$ is an $s$-neighbour of $x$ in $T_{union}[x,s+1]$.
Observe that by Remark \ref{Rmk: Subtrees by labels}.\ref{Cor: Max Subtree LMW}, $\lm(T_r[v_i,s+1]) = \lm(T_r[u_j,s+1]) = s$ if and only if $s$ is in the labels of both $T_r[v_i]$ and $T_r[u_j]$. If $s \not\in label(T_r[u_j,s+1])$, then $\lm(T_r[u_j,s+1]) < s$, and if this is true for all children of $v_i$, then $v_i$ is not an $s$-neighbour of $x$ in $T_{union}[x,s+1]$. If $s \not\in label(T_r[v_i,s+1])$, then $\lm(T_r[v_i,s+1]) < s$ and no subtree of $T_r[v_i,s+1]$ can have \LM $s$.
However, if $s \in label(T_r[u_j,s+1])$ and $s \in label(T_r[v_i,s+1])$ (this is when $v_i$ contributes to $t_s$), then $T_r[v_i,s+1] \cap T_r[u_j]$ must be equal to $T_r[u_j,s+1]$ and $T_r[u_j,s+1] \subseteq T_{union}[x,s+1]$, and we conclude that $v_i$ is an $s$-neighbour of $x$ in $T_{union}[x,s+1]$ if and only if $v_i$ contributes to $t_s$, so $t_s = D_{T_{union}[x,s+1]}(x,s)$. \\
Lastly, we show that if $T_{union}[x,s+1]$ is a Case 6 or Case 7 tree – that is, $|N_s| = 1$, and $T_r[v_1,s+1]$ is a type 3 or type 4 tree, with $w$ being the parent of an $s$-critical node – then the algorithm has $label(T_{union}[x,s+1]\backslash T_r[w,s+1])$ available for computation, indeed that this is the value of $cur\_label$. We know, by definition of label and Remark \ref{Rmk: Subtrees by labels}.\ref{Rmk: Different s and s+1} that $T_r[v_i,s+1]\backslash T_r[v_i,s] = T_r[w,s+1]$. But since $|N_s| = 1$, for every $j \neq i$, $T_r[v_j,s+1]\backslash T_r[v_j,s] = \emptyset$. Therefore $T_{union}[x,s+1]\backslash T_{union}[x,s] = T_r[w,s+1]$ and $T_{union}[x,s+1]\backslash T_r[w,s+1] = T_{union}[x,s]$. But by the induction assumption, $cur\_label = label(T_{union}[x,s])$. Thus $cur\_label$ corresponds to $label(T_r[x]\backslash T_r[w])$ in Prop. \ref{Prop: Case}.\\
We have now argued for all the correspondences in Table \ref{tabcor}. By that, we conclude from Prop. \ref{Prop: Case} and Definition \ref{Def: Union} and the inductive assumption that $cur\_label = label(T_{union}[x,s+1])$ at the end of the $s$'th iteration of the for loop in \textsc{MakeLabel}.
It runs for $k$ iterations, where $k$ is equal to the biggest number in any label of the children of $x$, and $cur\_label$ is then equal to $label(T_{union}[x,k+1])$. Since $k \geq \lm(T_r[v_i])$ for all $i$, by definition $T_r[v_i,k+1] = T_r[v_i]$ for all $i$, and thus $T_{union}[x,k+1] = T_r[x]$. Therefore, when \textsc{MakeLabel} finishes, $cur\_label = label(T_r[x])$.
\end{proof}

\begin{theorem}
Given any tree $T$, $\lm(T)$ can be computed in $\mathcal{O}(n\log(n))$-time.
\end{theorem}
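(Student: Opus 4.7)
The plan is to execute the bottom-up algorithm implicit in the preceding sections. After rooting $T$ at an arbitrary vertex $r$, we visit the vertices of $T_r$ in post-order; at every leaf we initialise $label \gets (0,t.0)$, and at every internal node $x$ we invoke \textsc{MakeLabel}$(T_r,x)$. Lemma~\ref{Lemma: MakeLabel correctness} guarantees that at the end the first entry of $label(T_r[r])$ equals $\lm(T)$ by Definition~\ref{Def:label}, so correctness of the computed number is immediate.

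For the time analysis I would combine the structural bound with a careful charging argument. By Remark~\ref{Rmk: log lmw} we have $\lm(T)=O(\log n)$; since every label is a strictly decreasing sequence of positive integers bounded above by $\lm(T)$ followed by a single $last\_type$ flag, each label has $O(\log n)$ entries. The outer for-loop inside \textsc{MakeLabel}$(T_r,x)$ therefore executes at most $O(\log n)$ iterations. In a single iteration at $x$ we (i) read the relevant suffix of the labels of the children $v_1,\ldots,v_d$ (using Remark~\ref{Rmk: Subtrees by labels}.\ref{Rmk: Label-subset} the labels of $T_r[v_i,s+1]$ are just suffixes of labels already stored), (ii) scan the labels of the children of each $v_i$ (the grandchildren of $x$) to determine whether $v_i$ contributes to $t_s$, and (iii) execute the constant-time decision tree from Prop.~\ref{Prop: Case}, where the only update that is not an overwrite is the Case~6 prepend $s\oplus cur\_label$, an $O(1)$ operation on linked lists. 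Steps (i)--(iii) cost $O(d+\#\mathrm{grandchildren}(x))$. Summed over all nodes the degrees total $2(n-1)$ and the grandchild counts total at most $n-1$, so one level of the outer loop costs $O(n)$; multiplying by the $O(\log n)$ iterations gives $O(n\log n)$.

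Finally we must output an optimal linear layout within the same budget. The plan is to apply the Path Layout Lemma recursively, guided by the labels produced in the bottom-up pass. Using the vertex list $(w_1,\ldots,w_p)$ associated with the complex $label(T_r[r])=(a_1,\ldots,a_p,last\_type)$, we identify a path $P$ in $T$ (read off from the case of Prop.~\ref{Prop: Case} that produced the outer label, either directly or by tracing between consecutive $w_i$) such that every component of $T\setminus N[P]$ has LMIM-width at most $\lm(T)-1$; this is exactly the situation analysed in the forward direction of Theorem~\ref{Thm: Classification theorem}. The algorithm \textsc{LinOrd} of Lemma~\ref{Path Layout Lemma} then combines $P$ with layouts produced recursively on the smaller components into a layout of width $\lm(T)$. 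The recursion has depth $O(\log n)$ by Remark~\ref{Rmk: log lmw} and each level touches each vertex of $T$ a constant number of times, so the layout is assembled in $O(n\log n)$ total time.

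The main obstacle is the charging inside the complexity analysis: one must verify that the work in iteration $s$ of a single \textsc{MakeLabel} call is proportional to the local degrees at $x$ (and not to the full length of the involved labels), so that the $O(\log n)$ iterations do not blow up to $O(\log^2 n)$ per edge. This hinges on the fact that at step~(ii) it suffices to test membership of the single integer $s$ in each grandchild label, which is $O(1)$ per grandchild if labels are stored as doubly linked lists with a pointer to the current ``active'' entry; the pointer advances monotonically as $s$ grows, so across all $O(\log n)$ iterations each grandchild is inspected $O(1)$ times in amortised sense. Once this bookkeeping is in place, the $O(n\log n)$ bound follows directly.
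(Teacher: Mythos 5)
Your proposal is correct and follows essentially the same route as the paper: the same bottom-up post-order pass calling \textsc{MakeLabel}, correctness delegated to Lemma~\ref{Lemma: MakeLabel correctness}, and the same charging of the $O(\log n)$-iteration loop to the children and grandchildren of each node (whose totals are $O(n)$), yielding $O(n\log n)$. The only difference is that you also sketch the layout construction, which the paper defers to Theorem~\ref{thm:lay}; that part is not needed for this statement but does not harm the argument.
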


\begin{proof}
We find $\lm(T)$ by bottom-up processing of $T_r$ and returning the first element of $label(T_r)$. After correctly initializating at leaves and nodes whose children are all leaves, we make a call to \textsc{MakeLabel} for each of the remaining nodes. Correctness follows by Lemma \ref{Lemma: MakeLabel correctness} and induction on the structure of the rooted tree.
For the timing we show that each call runs in $\mathcal{O}(log\ n)$ time.
For every integer $s$ from 1 to $m$, the biggest number in any label of children of $x$, which is $O(\log n)$ by Remark \ref{Rmk: log lmw}, the algorithm checks how many labels of children of $x$ contain $s$ (to compute $N_s$), and how many labels of grandchildren of $x$ contain $s$ (to compute $t_s$). The labels are sorted in descending order, therefore the whole loop 
goes only once through each of these labels, each of length $O(\log n)$.
Other than this, \textsc{MakeLabel} only does a constant amount of work. 
Therefore, \textsc{MakeLabel}$(T_r, x)$, if $x$ has $a$ children and $b$ grandchildren, takes time proportional to $O(\log n)(a+b)$.
As the sum of the number of children and grandchildren over all nodes of $T_r$ is $O(n)$ we conclude that the total runtime to compute  $\lm(T)$ is $\mathcal{O}(n \cdot log\ n)$.
\end{proof}


\begin{theorem} \label{thm:lay}
A layout of LMIM-width $\lm(T)$ of a tree $T$ can be found in $\mathcal{O}(n\cdot log\ n)$-time.
\end{theorem}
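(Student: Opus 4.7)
The plan is to produce an optimal layout by recursively invoking the constructive Path Layout Lemma, guided by the structural insights of Theorem \ref{Thm: Classification theorem}. First I would set $k = \lm(T)$ using the $O(n\log n)$ label algorithm of the previous theorem, and extend that preprocessing with a symmetric top-down rerooting pass that, together with the already-computed labels, tabulates $D_T(x,j)$ for every node $x$ and every threshold $j$. This preprocessing runs in $O(n \log n)$ time.

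Next I would define a recursive routine \textsc{Layout}$(T)$ that, given a tree with $\lm = k$, outputs an optimal ordering. It proceeds in three steps. First, it identifies a path $P = (x_1,\ldots,x_p)$ in $T$ such that every connected component of $T \setminus N[P]$ has \LM at most $k-1$. Because $\lm(T) = k$ forces $D_T(x,k) \leq 2$ everywhere, we are exactly in the hypothesis of the forward direction of the proof of Theorem \ref{Thm: Classification theorem}, so such a path exists and can be extracted by the three-case construction from that proof at threshold $k$: if $X = \{x : D_T(x,k) = 2\}$ is nonempty, take $P$ to span $X$ plus one extra endpoint on each side; otherwise if $Y = \{x : D_T(x,k) = 1\}$ is nonempty, build $P$ greedily by walking along $k$-neighbours; otherwise take $P = (x)$ for any single vertex. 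Second, it calls \textsc{Layout} recursively on each component of $T \setminus N[P]$. Third, it splices the sub-layouts together into $\sigma_T$ using the \textsc{LinOrd} procedure of Lemma \ref{Path Layout Lemma}. Correctness is immediate by induction on $k$: the Path Layout Lemma certifies $\lmo(\sigma_T, T) \leq k = \lm(T)$.

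For the complexity, \textsc{LinOrd} runs in time linear in the tree processed, and the trees handled at a single recursion depth are vertex-disjoint with total size at most $n$, so one depth costs $O(n)$ for assembly. By Remark \ref{Rmk: log lmw} the recursion depth is $O(\log n)$, giving $O(n \log n)$ in total for the \textsc{LinOrd} invocations. Adding the $O(n \log n)$ preprocessing yields the claimed bound, provided that path identification at each depth is also $O(n)$ amortised.

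The main obstacle will be keeping the $D$-value table consistent under the peeling, since within a component $T'$ of $T \setminus N[P]$ the values $D_{T'}(x,j)$ can differ from $D_T(x,j)$. I would handle this by observing that any discrepancy arises only at nodes incident to the removed boundary $N[P]$, of which there are $O(|P|)$ per peel, and that the sum of $|P|$ over all components at one recursion level is $O(n)$; a careful incremental update touching only these boundary nodes therefore preserves the $O(n)$-per-level budget needed for the final $O(n \log n)$ bound.
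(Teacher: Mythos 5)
Your high-level skeleton (find a path $P$ with all components of $T\setminus N[P]$ of LMIM-width $<k$, recurse on the components, splice with \textsc{LinOrd}, recursion depth $O(\log n)$ by Remark \ref{Rmk: log lmw}) matches the paper, and the existence of such a $P$ via the $X/Y$ case analysis from the proof of Theorem \ref{Thm: Classification theorem} is sound. The gap is in the data maintenance that is supposed to make each recursion level run in $O(n)$. Your repair rests on the claim that, for a component $T'$ of $T\setminus N[P]$, the values $D_{T'}(x,j)$ can differ from $D_T(x,j)$ only at nodes incident to the removed boundary. That is false: for \emph{every} node $x$ of $T'$, the dangling tree hanging from the neighbour of $x$ that points towards $P$ is truncated by the peel, so its LMIM-width --- and hence $D(x,j)$ --- can drop at nodes arbitrarily far from the boundary. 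Repairing the table therefore amounts to recomputing LMIM-widths of truncated dangling trees throughout $T'$, not an $O(|P|)$-sized update, and the amortised $O(n)$-per-level budget is not established. A secondary, unaddressed issue is the preprocessing itself: tabulating $D_T(x,j)$ for all $x$ requires $\lm(T\langle v,u\rangle)$ for all $2(n-1)$ directed edges, and because labels are complex objects combined by the case analysis of Prop. \ref{Prop: Case} (including exclude-one-child information in Cases 6 and 7), a rerooting pass achieving this in $O(n\log n)$ is far from immediate and is only asserted.

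The paper avoids both problems by never leaving the rooted world. The path $P$ is read off top-down from the \emph{types} stored in the already-computed labels (type 1: follow $k$-neighbours down from the root; type 2: glue two such descents at the $k$-critical root; types 3 and 4: first descend to the relevant subtree), and this particular choice guarantees that every component of $T\setminus N[P]$ is a full rooted subtree whose label is already known --- with the single exception, in the type 4 case, of the component $T_r\setminus T_r[w]$, for which the needed labels are exactly $label(T_r[y,k])$, i.e.\ $label(T_r[y])$ with its first entry dropped. Hence no $D$-table and no incremental repair are needed. To salvage your unrooted path construction you would have to supply an analogous mechanism for recovering the labels (or $D$-values) of the peeled components; as written, that mechanism is the missing step.
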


\begin{proof}
Given $T$ we first run the algorithm computing $\lm(T)$ by finding labels of all nodes and various subtrees.
Given $T$ we first run the algorithm computing $\lm(T)$ finding the label of every full rooted subtree in $T_r$.
We give a recursive layout-algorithm that uses these labels in tandem with \textsc{LinOrd} presented in the Path Layout Lemma. We call it on a rooted tree where labels of all subtrees are known. For simplicity we call this rooted tree $T_r$ even though in recursive calls this is not the original root $r$ and tree $T$. 
The layout-algorithm goes as follows:\\
\textbf{1)} Let $\lm(T_r)=k$ and find a path $P$ in $T_r$ such that all trees in $T_r \backslash N[P]$ have \LM $< k$. The path depends on the type of $T_r$ as explained in detail below.\\
\textbf{2)} Call this  layout-algorithm recursively on every rooted tree in $T_r\backslash N[P]$ to obtain linear layouts; to this end, we need the correct label for every node in these trees.\\
\textbf{3)} Call \textsc{LinOrd} on $T_r$, $P$ and the layouts provided in step 2.\\

Every tree in the forest $T\backslash N[P]$ is equal to a dangling tree $T\langle v,u\rangle$, where $v$ is a neighbour of some $x\in P$.\\
We observe that if $\lm(T) = k$, then by definition $\lm(T\langle v,u\rangle) = k$ if and only if $v$ is a $k$-neighbour of $x$. It follows that every tree in $T\backslash N[P]$ has LMIM-width at most $k-1$ if and only if no node in $P$ has a $k$-neighbour that is not in $P$. We use this fact to show that for every type of tree we can find a satisfying path in the following way:\\

\emph{Type 0 trees:} Choose $P = (r)$. Since $T\backslash N[r] = \emptyset$ in these trees, this must be a satisfying path.\\
\emph{Type 1 trees:} These trees contain no $k$-critical nodes, which by definition means that for any node $x$ in $T_r$, at most one of its children is a $k$-neighbour of $x$. Choose $P$ to start at the root $r$, and as long as the last node in $P$ has a $k$-neighbour $v$, $v$ is appended to $P$. This set of nodes is obviously a path in $T_r$. No node in $P$ can possibly have a $k$-neighbour outside of $P$, therefore all connected components of $T\backslash N[P]$ have LMIM-width $\leq k-1$. Furthermore, all components of $T-N[P]$ are full rooted sub-trees of $T_r$ and so the labels are already known.\\
\emph{Type 2 trees:} In these trees the root $r$ is $k$-critical. We look at the trees rooted in the two $k$-neighbours of $r$, $T_r[v_1]$ and $T_r[v_2]$. By Remark \ref{Rmk: single k-crit} these must both be Type $1$ trees, and so we find paths $P_1,P_2$ in $T_r[v_1]$ and $T_r[v_2]$ respectively, as described above. Gluing these paths together at $r$ we get a satisfying path for $T_r$, and we still have correct labels for the components $T\backslash N[P]$.\\
\emph{Type 3 trees:} In these trees, $r$ has exactly one child $v$ such that $T_r[v]$ is of type $2$ and none of its other children have LMIM-width $k$. We choose $P$ as we did above for $T_r[v]$. $r$ is clearly not a $k$-neighbour of $v$, or else $D_T(v,k) = 3$. Every other node in $P$ has all their neighbours in $T_r[v]$. Again, every tree in $T\backslash N[P]$ is a full rooted subtree, and every label is known.\\
\emph{Type 4 trees:} In these trees, $T_r$ contains precisely one node $w \neq r$ such that $w$ is the parent of a $k$-critical node, $x$. This $w$ is easy to find using the labels, and clearly the tree $T_r[w]$ is a type $3$ tree with LMIM-width $k$. We find a path $P$ that is satisfying in $T_r[w]$ as described above. $w$ is still not a $k$-neighbour of $x$, therefore $P$ is a satisfying path. In this case, we have one connected component of $T\backslash N[P]$ that is not a full rooted subtree of $T_r$, that is $T_r\backslash T_r[w]$. Thus for every ancestor $y$ of $w$ (the blue path in Figure \ref{fig:lay}) $T_r[y] \backslash T_r[w]$ is not a full rooted sub-tree either, and we need to update the labels of these trees. However, $T_r[y]\backslash T_r[w]$ is by definition equal to $T_r[y,k]$, whose label is equal to $label(T_r[y])$ without its first number. Thus we quickly find the correct labels to do the recursive call.
\end{proof}

\begin{center}
\begin{figure}[ht!]
\centering
\includegraphics[width=120mm]{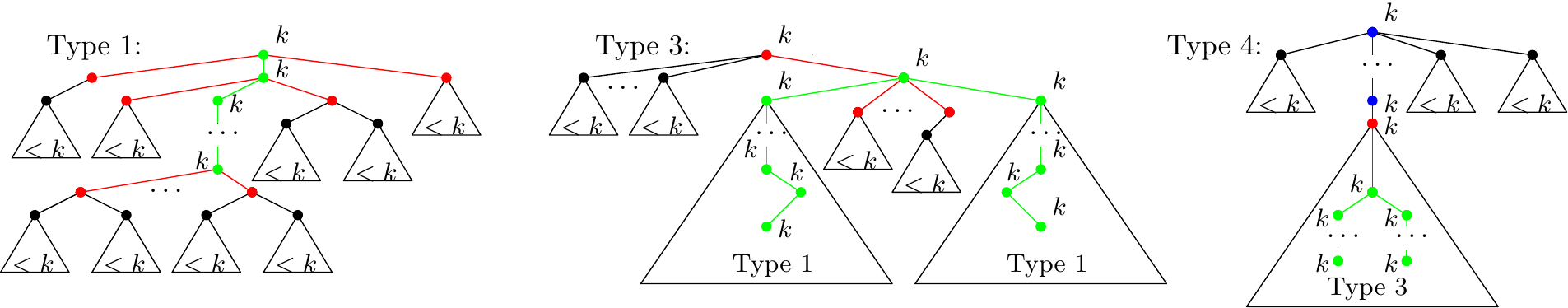}
\caption{The path $P$ in green for the proof of Theorem \ref{thm:lay}.}
\label{fig:lay}
\end{figure}
\end{center}

\section{Conclusion}

We have given an $O(n \log n)$ algorithm computing the LMIM-width and an optimal layout of an $n$-node tree. 
This is the first graph class of LMIM-width larger than $1$ having a polynomial-time algorithm computing LMIM-width and thus constitutes an important step towards a better understanding of LMIM-width. 
Indeed, for the development of FPT algorithms computing tree-width and path-width of general graphs, one could argue that the algorithm of 
\cite{EllisST94} computing optimal path-decompositions of a tree in time $O(n \log n)$ was a stepping stone. 
The situation is different for MIM-width and LMIM-width, as it is W-hard to compute these parameters \cite{SaetherV16}, but it is similar in the sense that
our objective has been to achieve an understanding of how to take a graph and assemble a decomposition of it, in this case a linear one,  such that it has cuts of low MIM. To achieve this objective a polynomial-time algorithm for trees has been our main goal. 

\bibliography{ref} 
\bibliographystyle{plain}

\end{document}